\newcommand{\vep}{\varepsilon}
\newcommand{\tu}{\tilde{u}}
\newcommand{\MP}{\mathcal{P}}
\newcommand{\CF}{\mathcal{F}}
\newcommand{\CB}{\mathcal{B}}
\newcommand{\CK}{\mathcal{K}}
\newcommand{\CD}{\mathcal{D}}
\newcommand{\nbh}{\text{neighbourhood}}
\newcommand{\be}{\begin{equation}}
\newcommand{\ee}{\end{equation}}
\newcommand{\lan}{ \langle }
\newcommand{\ran}{ \rangle}
\newcommand{\U}{\mathcal{U}}
\newcommand{\MH}{\mathcal{H}}
\newcommand{\MF}{\mathcal{F}}
\newcommand{\MD}{\mathcal{D}}
\newcommand{\ML}{\mathcal{L}}
\newcommand{\cK}{\mathcal{K}}
\newcommand{\ka}{\kappa}
\newcommand{\eps}{\varepsilon}
\newcommand{\teta}{\tilde{\eta}}
\newcommand{\tS}{\tilde{S}}
\newcommand{\strela}{\rightharpoonup}
\newtheorem{thm}{Theorem}[section]
\newtheorem{lem}[thm]{Lemma}
\newtheorem{coro}[thm]{Corollary}
\newtheorem{rem}[thm]{Remark}
\newtheorem{prop}[thm]{Proposition}
\newtheorem{example}[thm]{Example}
\begin{document}

\title{Exponential mixing for dissipative PDEs with bounded non-degenerate noise}

\author{Sergei Kuksin}
\address{S. Kuksin, Institut de Math\'emathiques de Jussieu-Paris Rive Gauche, CNRS, Universit\'e Paris Diderot, UMR 7586, Sorbonne Paris Cit\'e, F-75013, Paris, France; and
School of Mathematics, Shandong University, Shanda Nanlu 27, 250100, PRC; and
Saint Petersburg State University, Universitetskaya nab. 7/9, St. Petersburg, Russia,}
\email{sergei.kuksin@imj-prg.fr}
 
\author{Huilin Zhang}
\address{H. Zhang, Institute of Mathematics, Fudan University, Handan Road 220, 200433, PRC,}
\email{huilinzhang2014@gmail.com}

\begin{abstract}
We prove that well posed 
quasilinear equations of parabolic type, perturbed by bounded nondegenerate random forces, 
 are exponentially mixing for a large class of  random forces. 
\end{abstract}

\keywords{exponential mixing, 2d Navier-Stokes system, Ginzburg-Landau equation, dissipative PDEs, Markovian dynamic system, Haar colored noise}

\subjclass[2010]{35Q59, 35Q56, 35K59, 37A25, 37L55, 60F05, 60H15, 76D05}


\maketitle

\tableofcontents

%
%

%

\newcommand{\E}{\mathbb{E}}
\newcommand{\R}{\mathbb{R}}
\newcommand{\N}{\mathbb{N}}
\newcommand{\PP}{\mathbb{P}}
 
\newcommand{\C}{\mathbb{C}}
\newcommand{\X}{\mathbb{X}}
\newcommand{\T}{\mathbb{T}}
\newcommand{\BX}{\mathbf{X}}
\newcommand{\K}{\mathcal{K}}
\newcommand{\tK}{\tilde{\mathcal{K}}}
\newcommand{\oc}{\mathcal {C}}
\newcommand{\FC}{\mathscr{C}}
\newcommand{\B}{\mathbb{B}}
\newcommand{\BB}{\mathbf{B}}
\newcommand{\op}{\mathcal{P}}
\newcommand{\FD}{\mathscr{D}}
\newcommand{\oq}{\mathcal{Q}}
\newcommand{\oor}{\mathcal {R}}
\newcommand{\hc}{\hat{c}}
\newcommand{\BI}{\mathbf{1}}
\newcommand{\BZ}{\mathbf{Z}}
\newcommand{\I}{\mathcal{I}}
\newcommand{\Z}{\mathbb{Z}}
\newcommand{\bx}{\mathbf{x}}
\newcommand{\bz}{\mathbf{z}}
\newcommand{\s}{\mathbb{S}}
\newcommand{\A}{\mathbb{A}}
\newcommand{\LL}{\mathbb{L}}
\newcommand{\tiop}{\tilde{\mathcal{P}}}
\newcommand{\BBI}{\mathbf{I}}
\newcommand{\BY}{\mathbf{Y}}
\newcommand{\FKF}{\mathfrak{K}_F}
\newcommand{\hu}{\hat{u}}


\setcounter{section}{-1}
\section{Introduction}

In this paper we  consider nonlinear PDEs, perturbed by random forces, which we write as 
\be\label{1}
\partial_t u_t+\nu  Lu_t +F(u_t) = \eta_t, \quad  u_t \in H.
\ee
Here $H$ is a Hilbert space of functions of $x$, $\nu\in(0, 1]$, $L$ is a self-adjoint operator with  compact resolvent,  $F$
is a nonlinearity and  $\eta_t = \eta^\omega_t$ is a random process in $H$. We are concerned with the question when 
a solution $u_t$ of \eqref{1} is a mixing random process, i.e. when its distribution $\MD(u_t)$ converges to a unique measure 
in $H$, independent from the initial data $u_0$, while $t\to\infty$. 
The problem of mixing in equations 
\eqref{1} is well motivated by modern physics,
and it has received much attention during the last two decades, see in \cite{KS}. In the corresponding papers (discussed in
\cite{KS}) the authors prove the mixing for various classes of equations \eqref{1}, assuming that the random force $\eta_t$ has 
 the structure which we will now discuss.  
 
 We  suppose that
 $$
 \eta_t = \sum_{i=1}^\infty b_i \eta^i_t\phi_i, \qquad \sum b_i^2<\infty,
 $$
 where $\{\phi_i\}_i$ is an  orthonormal  basis of $H$ and   $\{\eta^i_t, i\ge 1\}$ are i.i.d. real processes,  distributed as a certain etalon process $\eta^0_t$.
 Concerning the latter it was assumed that  either this is
 
 a) a kick-process, 
 $\eta^0_t = \sum_{k=1}^\infty \delta(t-kT)\xi_k,$
 where $T>0$ and 
  $\{\xi_k\}$ are i.i.d. real random variables; \ or that 
 
 b) $\eta^0_t$ is a white noise; or
 
   c)  $\eta^0_t$ is a compound Poisson process (see in \cite{KS}). 
 
 Concerning the coefficients $b_i$ it was usually assumed that either all of them are non-zero, or that  $b_i\ne0$ for all $i\le N_\nu$,
 where $N_\nu$ grows to infinity as $\nu\to0$. In the paper \cite{HM} the mixing was established  for the case when \eqref{1} is the 
 2D Navier-Stokes system on the two-dimensional torus, perturbed by a white in time random force 
  (see below eq.~\eqref{NS}), and only a $\nu$--independent 
  finite system of these coefficients do not vanish. 
  The proof of \cite{HM} uses an infinite--dimensional version of the Malliavin calculus, and all
 attempts to generalize it to equations with kick-forces or compound Poisson processes have   failed.
 
  Instead, in \cite{KNS} equations of the form \eqref{1} were considered, where $\eta^0(t)$ is a random Haar series ``of time-width one.''\footnote{ the time-width 
 one may be replaced by any  positive width.} This means the following:
 
   The processes
 $
 \eta^0\!\mid_{[k-1, k)}$, $k=1,2,\dots,
 $
 are i.i.d., so it suffices to define the process $ \eta^0\!\mid_{[0,1]}$. The latter is a random Haar series 
 $$
 \eta^0_t = \sum_{j=0}^\infty  c_j  \sum_{l=0}^{2^j -1} \xi_{jl} h_{jl}(t) \quad \forall\, 0\le t\le1,
  \qquad c_j \ne0\;\; \forall\, j, 
 $$
 where $\{ h_{jl} \}$ is the Haar base in $L_2[0,1]$ (see in \cite{Lam} and see \eqref{haar} below), and $\xi_{jl} $ are independent 
 random variables, $|\xi_{jl} |\le1$, whose density functions $ \rho_{jl} (x)$ are Lipschitz-continuous and do not vanish at $x=0$. 
 Let us denote $E= L_2([0,1],H)$ and consider the mapping 
 \be\label{S_oper}
 S: H\times E \to H, \quad S(u_0, \eta) = u_1,
 \ee
 where $u_t$ is a solution of \eqref{1}, equal $u_0$ at $t=0$.

 Assuming that 
 
 {\it 
 (B0) $\sum_j 2^{j/2} |c_j| <\infty$, so  the process $ \eta_t $ is bounded uniformly in $t$ and $\omega$ (see below in 
 Section~\ref{s_noise});

 (B1) the mapping $S$ is well defined. Moreover, there exists a compactly embedded Banach subspace
  $ V\subset H$ such that $S(H\times E ) \subset V$, the mapping 
 $S: H\times E \to V$ is analytic  and its derivatives  are bounded on bounded sets;

  (B2)  $0\in H$ is an asymptotically stable equilibrium for eq.  \eqref{1} with $\eta=0$, and relation \eqref{linearbound} 
 holds;

 (B3) $b_j \ne0$ for $j\in J\subset\mathbf N$, where the set $J$, finite or infinite, is such that  the linearised equation 
 $$
 \partial_t v+\nu Lv + dF(u_t) v = \sum_{j\in J} b_j  \xi^j_t \phi_j, \quad 0\le t\le1, 
 $$
 is approximately controllable by controls $\xi^j_t$, $j\in J$, provided that $u_t$ is a solution of \eqref{1} 
 with arbitrary $u_0$ and with $\eta=\eta^\omega$, where $\omega$ does not belong to a certain null-set depending on $u_0$, }
 \smallskip
 
 \noindent it was proved in \cite{KNS} that eq. \eqref{1} is exponentially 
  mixing. \footnote{ our statement of this result is a bit imprecise and  less general than the result of \cite{KNS} is; 
   see the  original paper for the exact statement.
 }
 
 The condition (B3) has been  intensively studied, and for many important equations it is now established that (B3) 
 holds if $J$ is a finite set, satisfying certain  explicit conditions,  independent from $\nu$. See in \cite{KNS}.
 \medskip 
 
 Our work continues the research in \cite{KNS}. Namely, making the assumptions (B$0'$) --  (B$3'$), where

 {\it (B$0'$) $\; \sum_j c_j^2 2^j <\infty$;

 (B$1'$) the mapping  $S: H\times E \to V$ is $C^2$--smooth and its derivatives up to second order are bounded on bounded sets;
 
  (B$2'$) = (B2);

  (B$3'$)  for any $u\in H$ 	and $ \eta\in E$ 
 \be\label{nondeg}
 \text{
 the operator $d_\eta S(u,\eta): E \to H$ has dense image},
 \ee
   }
we prove 
in Theorem \ref{t_final}, Section 1,  the following result:
  \medskip
  
  \noindent
  {\it there exists a unique Borel measure $\mu_{\nu}$ in $H$ such that if $u_t$ is a solution of \eqref{1} with the initial data 
  $u_0^\omega$, satisfying 
  $\ 
  u^\omega_0 \in \{u\in H: \| u\| \le R\}$ almost surely for some $R>0$, 
  then 
  \be\label{result}
  \| \MD u_k - \mu_{\nu}\|_L^* \le C \kappa^k, \quad \forall\, k\in \N. 
  \ee
  Here $\| \cdot \|_L^*$ is the Lipschitz-dual distance in the space of Borel 
  measures in $H$  (see \eqref{f_lip}),  and $C=C(R) >0$,
  $\kappa =\kappa(R) \in (0, 1)$. Moreover, if $u_0^\omega$ is any random variable in $H$, then $ \MD u_k$ 
  weakly converges to $\mu_\nu$ as $k\to\infty$. 
  }
   \medskip
   
 A special case of systems \eqref{S_oper} appears when $E$ is a subspace of $H$ and the mapping $S$ has the form
 $$
 S(u_0, \eta) = S_1(u_0) +\eta. 
 $$
 Such systems correspond to equations \eqref{1} with kick-forces $\eta$ as in a) above (see Remark~\ref{e_kicks} in 
 Sectuion~\ref{s_2dNS}). They are easier than general systems \eqref{S_oper} and for them the mixing can be proved
 if the map $S_1$ is Lipschitz--continuous on bounded sets, see \cite{KS}. 
 \medskip
   
   In Section 2 we show that the result applies to the 2D Navier-Stokes system on torus as well as to equations \eqref{1} 
   who are random perturbations of quasilinear parabolic systems which are\\
   -- well posed for any bounded 
    force $\eta(t,x)$, sufficiently smooth is $x$;\\
   -- satisfy the dissipativity assumption (B2). 
   
   The proof uses some ideas from \cite{KS, KNS} and  is significantly shorter than that in  \cite{KNS} since
   the nondegeneracy assumption (B$3'$)  which we assume now is stronger than the assumption (B3); essentially
   it holds if the force $\eta_t$ is non-degenerate (all coefficients $b_i\ne0$). 
    As    in   \cite{KNS}, the  exponential convergence \eqref{result} 
   follows from Doeblin's coupling, 
   enhanced with the quadratic convergence in the form, close  to that in the KAM--theory.\footnote{The 
   peculiarity of the used KAM--scheme is that now, in difference with the "traditional" KAM, the rate of 
   convergence is not super-exponential, but only exponential. Cf. \cite{KNS}, Section~1.3.
   }
   In particular, our work shows that the analyticity assumption in (B1)  is not an 
   intrinsic feature of the approach of  \cite{KNS}, 
   but is  needed to work under the very weak 
   nondegeneracy assumption (B3) and may be replaced by the $C^2$--smoothness if the equation  is non-degenerate
   in the sense (B$3'$). Our result is easy to apply since it is easy to check the assumption  (B$3'$). 
   \medskip

\noindent
{\it  Notation}.
As usual, by $C, C_1 ,\dots$  we denote various constants which change from line to line.  By
$ B_F(R)$ we denote the closed  $R$-ball in a Banach space $F$, centered at the origin;
by $\CD(\xi)$ --  a law of a r.v.  $\xi$;
$\langle\mu, f\rangle = \langle f,\mu\rangle $ stands for  the integral of a measurable function $f$ against a measure $\mu$,
and $\ML(X,Y)$ -- for the space of bounded linear operators between  Banach spaces $X$ and $Y$.
A complete separable metric 
space (i.e. a Polish space) $X$  always is equipped with its Borel $\sigma$-algebra $ \CB_X$.
By $\MP(X)$ we denote the space of probability measures on $(X, \CB_X)$. We  provide it with the
Lipschitz-dual distance
\be\label{f_lip}
\| \mu-\nu\|_L^* = \| \mu-\nu\|_{L(X)}^* =
 \sup| \lan \mu-\nu, f\ran|,
\ee
where the supremum is taken over all Lipschitz functions $f$ on $X$ such that
$$
|f|_{L(X)} := \sup_{x\in X} |f(x)|\, \vee \, \text{Lip}(f) \le1,
$$
with the Kantorovich distance
\be\label{Kan_d}
\| \mu-\nu\|_K = 
 \sup_{ \text{Lip}(f) \le1} | \lan \mu-\nu, f\ran|,
\ee
and the distance of total variation.
If $X,Y$ are Polish spaces and $f :X\to Y$ is a Borel--measurable mapping, then $f_*: \MP(X) \to \MP(Y)$ denotes the
corresponding push-forward of measures. A pair of random variables $\xi, \zeta$ defined on a probability space 
is \textit{a coupling} for given measures $\mu,\nu \in \MP(X)$ if $\CD(\xi)=\mu, \CD(\zeta)=\nu.$

\bigskip

{\bf Acknowledgements:} S.K. thanks the Russian Science Foundation for support through the project 18-11-00032.
H.Z. is supported by the National Postdoctoral Program for Innovative Talents, No:~BX20180075 and China Postdoctoral Science Foundation.

\section{Mixing for a class of  Markov chains}

\subsection{Settings and assumptions}
Let
$H,\ E$ be two separable Hilbert spaces and $V$ be a Banach space, compactly and densely
embedded in $H.$  Let  $S: H\times E \rightarrow V$ be a continuous mapping. We consider the following random dynamical  system (RDS) in the space $H$:
\begin{equation}\label{RDS1}
u_k=S(u_{k-1},\eta_k), \ \ k\geq 1,
\end{equation}
where $\{\eta_k \}_k$ is a sequence of i.i.d. random variables in $E$. We denote by $\ell$ the law of $\eta_k$,
$$
\ell = \MD (\eta_k) \in \MP(E)
\qquad \forall\, k,
$$
and suppose  that  $\ell$ is supported by a compact set $\K \subset E$,
$$
\K:= \text{supp}\,\ell \Subset E.
$$
By $(u_k(v), k\ge0)$, we denote a trajectory of
\eqref{RDS1} such that $u_0(v) =v$. The process $\{ u_k(v)\}$ is a Markov chain in $H$, whose transition probability function after $k$ steps is
$ P_k(u,\Gamma) = \PP(u_k(u)\in\Gamma)$. It defines a semigroup of Markov operators in the space of functions
$$
\op_k : C_b(H) \to C_b(H), \quad f(\cdot) \mapsto \E\big( f(u_k(\cdot)\big), \quad k \ge0,
$$
and a semigroup of operators  in the space of measures
$$
\op_k^*: \op(H) \rightarrow \op(H),\quad \mu\mapsto \CD(u_k(v)),\quad k \ge0,
$$
where $v$ is a r.v. in $H$, independent from the noise $\eta$, such that $\CD(v)=\mu$ (e.g. see in \cite{KS}).
\medskip

We make the following assumptions concerning  regularity of our system:

\begin{itemize}
\item[(A1)](\textbf{Regularity}). {\it The
mapping $S:H \times E \rightarrow V$ is twice continuously differentiable, and its derivatives up to second order are bounded on bounded sets. \\ }

\end{itemize}
\medskip

Concerning the  noise $\eta$ we assume the following:
\begin{itemize}
\item[(H1)](\textbf{Decomposability and non-degeneracy}). 
{\it There exists an orthonormal basis $\{e_j, j\ge1\}$ of $E$ such that
    $$
    \eta_k= \sum_{j=1}^{\infty}b_j \xi_{jk} e_j, \quad b_j \ne0 \; \forall\,j. 
$$
Here $\xi_{jk}$ are independent random variables and $b_j$ are real  numbers, satisfying
\begin{equation}\label{5}
|\xi_{jk}| \leq 1 \ \ a.s., \ \ \ \ \
\MD(\xi_{jk})=\rho_j(r)dr,
\ \ \ \
R_\eta^2:= \sum_{j=1}^\infty b_j^2 < \infty,
\end{equation}
where $\rho_j: \R \rightarrow \R $ are Lipschitz functions  and $\rho_j(0) \ne 0$ for all $j$}.

\item[(H2)](\textbf{Dissipativity}). 
{\it  For any $u\in H$, $\eta\in \K$ we have  \begin{equation}\label{linearbound}
    \|S(u,\eta )\|_{H} \leq \gamma \|u \|_H + \beta, \qquad \|S(u,0 )\|_{H} \leq \gamma \|u \|_H  \ \text{with  some} \;0<\gamma<1, \beta>0.
    \end{equation}
        }

%

\item[(H3)](\textbf{Non-degeneracy}). {\it For any $u\in  H$ and  $\eta \in \K$ the image of the 
operator $D_\eta S(u,\eta ): E \rightarrow V$ is dense  in {$H$.}}
\end{itemize}

Note that by \eqref{5} supp$\,\ell =\CK$ belongs to the Hilbert brick
\be\label{brick}
\{ \eta = \sum\eta_ke_k: |\eta_k|\le b_k\} =: \tilde \CK,
\ee
so indeed $\CK$ is a compact set, and
$
\CK \subset B_E(R_\eta).
$

It is easy to see that since  $\rho_j(0) \ne 0$ for all $j$, then 
according to relations  \eqref{linearbound} and assumption (A1), for any $a>\gamma$ and $\delta>0,$
\begin{equation}\label{contract1}
\PP\left( \|S(u,\eta)\|_H < a\|u\|_H  \right)> p_\delta > 0, \ \ \text{ if } \|u\|_H \geq \delta.
\end{equation}

\begin{rem}\label{r_1}
It is not necessary to define the map $S$ on the whole space $H\times E$: it suffices that it is defined on a \nbh \  $Q$ of $H\times \K$
which contains each point  $(u_0\times\eta)\in B_H(R)  \times \K$ with its vicinity in $H\times E$ 
of a positive radius $r(R)$. The map $S$ should be $C^2$--smooth and its $C^2$--norm should 
depend only on $\|u\|_H$.
\end{rem}

\begin{rem}\label{r_new_norm}
Assume that the system possesses the following additional property: there exists a function $R_+(R)$ such that if $\|u_0\|_H\le R$ 
and $\eta_1, \eta_2,\dots$ are arbitrary points in $\K$, then $\|u_k\|_H\le R_+(R)$ for all $k\ge0$. Then in order to 
study solutions of \eqref{RDS1} with $\|u_0\|_H\le R$  we can do the following:

-- define $O\subset H$ as a union of all trajectories of  \eqref{RDS1} with $\|u_0\|_H\le R$ and $\eta_k\in\K$ for all $k$;

-- consider the closure $\bar O$. This is a closed subset of $ B_H(R_+)$, invariant for  \eqref{RDS1} a.s. 

Then to study solutions with $\|u_0\|_H\le R$ we can work with system's restriction to $\bar O$. In particular it suffices to verify assumptions
(A1) and  (H3) for $u\in B_H(R_+)$.  Even more, it suffices 
to check (A1) and (H3) for $u\in B_H(R_+)$ with the norm $\| \cdot \|_H$ replaced by any equivalent Hilbert norm $\| \cdot \|'_H$
 in the space $H$, depending on $R_+$ (i.e., depending on $R$).  Applying  Remark~\ref{r_1} 
 we observe that  it suffices to verify  (A1) and (H3) on
 the set $Q \cap (B_H(R_+ )\times E)$. 
\end{rem}

\subsection{Main results}

Here we formulate  the  main results of this paper.
In what follows $\eta$ stays for  an element of $\K$ or for a random variable with the law  $\ell$,
 depending on the context. We will use a modified distance in the space $H$  which depends on a parameter
 $d_0\in (0,1]$:
 $$
  \|\xi_1- \xi_2\|_{d_0}:=\|\xi_1- \xi_2\|_H \wedge d_0.
 $$
 Clearly $ \|\xi_1- \xi_2\|_{d_0} \le  \|\xi_1- \xi_2\|_H$ and
 \be\label{d_dist}
 \|\xi_1- \xi_2\|_H \le \frac{2R_*}{d_0} \|\xi_1- \xi_2\|_{d_0}  \quad \text{if} \quad \xi_1, \xi_2\in B_H(R_*), \text{with } 2R_* \geq d_0.
 \ee

\begin{thm}\label{mainthm}
Under the assumptions (A1), (H1), (H2) and (H3),  
   for any $R_*>0$ and any
$u,u' \in   B_{H}(  R_*),$
\begin{equation}\label{statement}
\| P_k(u,\cdot)- P_k(u',\cdot) \|_L^* \leq C \| u-u' \|_H \, \ka^k,
\end{equation}
where $0<\ka<1$  and $C>0$ depend on $R_*$.
\end{thm}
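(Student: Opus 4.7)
I would prove Theorem~\ref{mainthm} by a coupling argument in three steps: reduce to a bounded invariant set via dissipativity, build a one-step contracting coupling using the non-degeneracy (H3), and conclude by a renewal-type iteration combined with Kantorovich--Rubinstein duality.

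\emph{Step 1: Absorbing ball.} Iterating \eqref{linearbound} gives $\|u_k\|_H\leq \gamma^k R_* + \beta/(1-\gamma)$ almost surely for any $u_0\in B_H(R_*)$. Hence both trajectories stay in $B_H(R_+)$ for some $R_+=R_+(R_*)$. By Remark~\ref{r_new_norm} I restrict the analysis to a closed invariant set $\bar O\subset B_H(R_+)$ and, if useful, replace $\|\cdot\|_H$ by an equivalent Hilbert norm.

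\emph{Step 2: One-step contracting coupling.} The heart of the proof is the following lemma: there exist $\alpha\in(0,1)$, $\delta_1>0$, $C_1>0$ depending only on $R_+$, such that for every $u,u'\in\bar O$ with $\|u-u'\|_H\le\delta_1$ there is a coupling $(\eta,\eta')$ with $\CD(\eta)=\CD(\eta')=\ell$ satisfying
$$
\E\,\|S(u,\eta)-S(u',\eta')\|_{d_0} \leq \alpha\,\|u-u'\|_{d_0}.
$$
I construct it as follows. For a fixed sample $\eta$ the $C^2$-regularity (A1) yields a Taylor expansion
$$
S(u',\eta+\zeta)-S(u,\eta)=D_u S(u,\eta)(u'-u)+D_\eta S(u,\eta)\zeta+R,\qquad \|R\|_V\leq C\bigl(\|u-u'\|_H^2+\|\zeta\|_E^2\bigr).
$$
The density of the image in (H3) lets me pick $\zeta=\zeta(u,u',\eta)\in E$ with $\|\zeta\|_E\le C_0\|u-u'\|_H$ so that $D_\eta S(u,\eta)\zeta$ cancels the linear term $-D_u S(u,\eta)(u'-u)$ up to a controllably small residual; thus $\|S(u',\eta+\zeta)-S(u,\eta)\|_H\leq \alpha_0\|u-u'\|_H$ for some small $\alpha_0$. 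To convert this into a coupling of $\ell$ with itself, I exploit (H1): since $\ell$ is the product measure of densities $\rho_j$ that are Lipschitz and non-vanishing at $0$, the translate of $\ell$ by $\zeta$ is at total-variation distance at most $C\|\zeta\|_E\le CC_0\|u-u'\|_H$ from $\ell$. A maximal coupling of $\eta+\zeta$ and a fresh copy $\eta'\sim\ell$ then produces the required pair: on a good event of probability $\geq 1-CC_0\|u-u'\|_H$ one has $\eta'=\eta+\zeta$ and hence the contraction $\alpha_0\|u-u'\|_H$; on the complementary bad event the distance is a priori bounded by $d_0$, so choosing $d_0$ small enough gives the displayed inequality with some $\alpha<1$.

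\emph{Step 3: Iteration.} Accessibility of the small diagonal region $\{\|u-u'\|_H\le\delta_1\}$ from all of $\bar O\times\bar O$ follows from \eqref{contract1}: there exist $n_0\in\N$ and $p_0>0$ such that for every starting pair $u,u'\in\bar O$ the coupled trajectory satisfies $\|u_{n_0}-u_{n_0}'\|_H\leq\delta_1$ with probability at least $p_0$ (apply \eqref{contract1} repeatedly to a synchronous coupling to drive both components into a small ball around $0$). Combining this with Step~2 gives a Lyapunov-type recursion for $\E\|u_k-u_k'\|_{d_0}$, and the standard renewal/coupling argument of \cite[Ch.~3]{KS} (and \cite{KNS}) upgrades it to $\E\|u_k-u_k'\|_{d_0}\leq C'\|u-u'\|_H\,\ka^k$. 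The bound \eqref{statement} then follows from the Kantorovich--Rubinstein duality applied to the coupled chain, using \eqref{d_dist} on $B_H(R_+)$ to pass between $\|\cdot\|_{d_0}$ and $\|\cdot\|_H$.

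\emph{Main obstacle.} The delicate ingredient is the quantitative inversion of $D_\eta S(u,\eta)$ in Step~2. Hypothesis (H3) gives density of the image but no a priori bound on a right inverse, so the estimate $\|\zeta\|_E\le C_0\|u-u'\|_H$ is not for free. I expect the resolution to rely either on a compactness argument (continuity of $D_\eta S$ from (A1) together with compactness of $\K$ and of $\bar O$ in $V$) producing a uniform local almost-inverse, or on the Newton/KAM-style iteration alluded to in the introduction: one solves the linearized equation modulo a residual of smaller order, then iterates, so that the cumulative perturbation of $\eta$ remains $O(\|u-u'\|_H)$ while the final mismatch is small enough to produce $\alpha<1$.
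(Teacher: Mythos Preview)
Your outline matches the paper's argument closely; the one-step coupling via an approximate right inverse of $D_\eta S$ and the compactness mechanism you anticipate in your ``Main obstacle'' paragraph are exactly what the paper does (Proposition~\ref{rightinverse} and Lemma~\ref{mainlemma}). Two refinements are worth flagging.

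First, your total-variation claim in Step~2 is not correct as stated: in infinite dimensions a shift of the product measure $\ell$ by a generic $\zeta\in E$ need not be absolutely continuous with respect to $\ell$, so $\|\ell-\ell_\zeta\|_{var}\le C\|\zeta\|_E$ can fail. The paper's approximate right inverse $R_\varepsilon(u,\eta)$ is built to take values in a \emph{fixed finite-dimensional} coordinate block $E_{M_\varepsilon}$, and the TV bound then comes from a Lipschitz change of variables on that finite-dimensional factor (Step~3 of the proof of Lemma~\ref{mainlemma}); note also that $\zeta$ depends on $\eta$, so this is not a rigid translation. Your sketch should be amended accordingly.

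Second, for the iteration the paper bypasses the renewal machinery of your Step~3 by constructing a single Kantorovich density $f_K$ (Lemma~\ref{1stepcontract}): $f_K(u,u')=\|u-u'\|_H$ below the threshold $d_0$, and $f_K(u,u')=f(\|u\|\vee\|u'\|)$ above it, with $f$ a carefully designed step function on $(\tfrac12 d_0,R_*]$ so that under the synchronous coupling the large-distance regime already contracts via \eqref{contract1}. One then has $\E f_K(u_1,u_1')\le\kappa f_K(u,u')$ in both regimes, and \eqref{statement} follows by straight induction. Your accessibility-plus-renewal route would also work, but the single-functional argument gives the factor $\|u-u'\|_H$ in \eqref{statement} without further bookkeeping.
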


\begin{coro}\label{c_1}
If $\mu, \nu\in \MP(B_H(R_*))$,   then
\be\label{contruction}
\|\MP_k^* \mu - \MP_k^* \nu \|_L^* \le  C_1
\| \mu -  \nu \|_L^*\,  \kappa^k\quad \forall\, k\ge1,
\ee
where $0<\ka<1$  and $C_1>0$ depend on $R_*$.
\end{coro}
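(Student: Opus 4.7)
The plan is to dualize the pointwise Markov--kernel bound \eqref{statement} and push it onto the space of measures. For any bounded Lipschitz function $f$ on $H$ with $|f|_{L(H)}\le 1$ one has the standard duality
$$
\lan \MP_k^*\mu-\MP_k^*\nu,\, f\ran = \lan \mu-\nu,\, \MP_k f\ran,
$$
where $\MP_k f(u)=\E f(u_k(u))$, so the task reduces to producing, for each such $f$, a single test function $\tilde h$ on $H$ with $|\tilde h|_{L(H)}$ of order $\ka^k$ that agrees with $\MP_k f$ up to an additive constant on the common support $B_H(R_*)$ of $\mu$ and $\nu$.

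First, Theorem~\ref{mainthm} at the level of kernels gives, for $u,u'\in B_H(R_*)$,
$$
|\MP_k f(u)-\MP_k f(u')| = |\lan P_k(u,\cdot)-P_k(u',\cdot),\,f\ran| \le \|P_k(u,\cdot)-P_k(u',\cdot)\|_L^* \le C\|u-u'\|_H\,\ka^k,
$$
so $\MP_k f$ is Lipschitz on $B_H(R_*)$ with constant $\le C\ka^k$. To make its sup-norm small as well, I would fix any $u_0\in B_H(R_*)$ and set $h:=\MP_k f-\MP_k f(u_0)$. Subtracting a constant is invisible to $\lan\mu-\nu,\cdot\ran$, and on $B_H(R_*)$ the function $h$ now satisfies simultaneously $|h|\le 2R_*C\ka^k$ and $\text{Lip}(h)\le C\ka^k$.

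Next I would extend $h|_{B_H(R_*)}$ to all of $H$ by the McShane formula, which preserves the Lipschitz constant, and then truncate the extension at the level $\pm 2R_*C\ka^k$; truncation is $1$-Lipschitz, so this preserves the Lipschitz constant and enforces the sup-bound. The resulting $\tilde h$ coincides with $h$ on $B_H(R_*)$ and satisfies $|\tilde h|_{L(H)}\le C'\ka^k$ with $C'=(2R_*\vee 1)C$. Because $\mu,\nu\in\MP(B_H(R_*))$,
$$
\lan \mu-\nu,\,\MP_k f\ran = \lan\mu-\nu,\,h\ran = \lan\mu-\nu,\,\tilde h\ran,
$$
and the definition of the Lipschitz-dual norm gives $|\lan\mu-\nu,\tilde h\ran|\le C'\ka^k\|\mu-\nu\|_L^*$. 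Taking the supremum over admissible $f$ then yields \eqref{contruction} with $C_1=C'$.

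The argument is essentially bookkeeping once Theorem~\ref{mainthm} is in hand; the only mildly non-automatic point is the joint control of Lipschitz constant and sup-norm of $\MP_k f$ on $B_H(R_*)$, which is handled by the centring-and-truncation step above, and no further analytic input is required.
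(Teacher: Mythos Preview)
Your argument is correct. It differs from the paper's proof in the way the pointwise estimate \eqref{statement} is transported to measures. The paper proceeds via optimal transport: it takes an arbitrary coupling $(u,u')$ of $(\mu,\nu)$, bounds $|\lan f,\MP_k^*\mu-\MP_k^*\nu\ran|$ by $C\ka^k\,\E\|u-u'\|_H$, optimizes over couplings to obtain the Kantorovich distance $\|\mu-\nu\|_K$ via the Kantorovich--Rubinshtein theorem, and finally uses the support restriction to pass from $\|\cdot\|_K$ to $\|\cdot\|_L^*$ at the cost of a factor $R_*$. You instead stay on the dual side throughout: you read \eqref{statement} as a Lipschitz bound for $\MP_k f$ on $B_H(R_*)$, then centre, McShane-extend and truncate to manufacture a global test function with controlled $|\cdot|_{L(H)}$-norm. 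Your route is slightly more elementary in that it avoids the Kantorovich--Rubinshtein duality theorem (McShane extension for scalar functions is a one-line sup formula); the paper's route is a bit shorter and makes the link with transport distances explicit. Both yield the same constant up to a harmless factor of order $R_*$.
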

\begin{proof}
Let $(u^{\omega'}, u'^{\omega'})$ be a coupling for $(\mu, \nu)$. Then
$
\MP_k^*(\mu) =\CD u_k(u)$, $ \MP_k^*(\nu) =\CD u_k(u').
$
Take any $f\in C_b(H)$ with $|f|_{L(H)}\le1$. Then
\begin{equation*}
\begin{split}
| \lan f, \MP_k^*(\mu) -  \lan f, \MP_k^*(\nu) \ran| =\big| \E^{\omega'} \E^{\omega} (f(u_k(u^{\omega'}))-   f(u_k(u'^{\omega'})))\big|
\le C\kappa^k  \E^{\omega'} \| u^{\omega'} - u'^{\omega'} \|_H,
\end{split}
\end{equation*}
where the inequality follows from the theorem. This inequality remains true  if in the r.h.s. we take infimum over all
couplings  $(u, u')$ for $(\mu, \nu)$.  By the Kantorovich--Rubinshtein theorem
(see in \cite{KS})
$$
\inf_{u,u'} \E^{\omega'} \| u^{\omega'} - u'^{\omega'} \|_H = \| \mu -\nu\|_K,
$$
where $\|\cdot \|_K$ is the Kantorovich distance \eqref{Kan_d}.  Since $\nu, \mu$ supported on $ B_H(R_*)$, then
$\|\mu -\nu\|_K \le R_* \|\mu-\nu\|_L^*$, and the result follows.
\end{proof}

Clearly without loss of generality we may assume that in Theorem \ref{mainthm}
\begin{equation}\label{R_*}
R_*\ge \frac{\beta}{1-\gamma} =:  R_*^0\,;
\end{equation}
this relation is always assumed from now on. Then in view of (H2)
$$
S(u,\eta) \in B_H(R_*) \quad\text{if} \quad u\in B_H(R_*), \eta\in \K.
$$
By $(A1)$ if $u\in B_H(R_*)$ and $\eta \in \K$, then  $\| S(u,\eta)\|_V \le K(\K, R_*)$. We see that
the set
$$
X= X_{\K, R_*} = \text{completion in $H$ of $B_H(R_*) \cap B_V(K(\K, R_*))$},
$$
is a compact subset of $H$ such that
$$
S:  B_H(R_*)
\times \K \mapsto X.
$$
In particular,
\be\label{hihi}
S: X\times \K \mapsto X.
\ee
That is, the RDS \eqref{RDS1} defines a Markov chain in  $X$ and
$$
S(u, \eta) \in X \quad \text{a.s. \ \ \ if}\;\; u\in B_H(R_*) .
$$
From here for $u,u'$ as in the theorem's assumption we have
$$
u_k(u),\; u_k(u') \in X\quad\text{for\; $ k\ge1$,\; \; a.s.}
$$
The proof of Corollary \ref{c_1} and this relation show that \eqref{contruction} holds if \eqref{statement} is established only for $u, u'\in X$. But 
\eqref{contruction} implies the validity of the assertion of Theorem~\ref{mainthm} for all $u,u'\in B_H(R_*)$. So
\begin{equation}\label{attention}
\begin{split}
&\text{ \it proving the theorem we may assume that $u,u' \in X$ and }\\
&\text{ \it   regard \eqref{RDS1} as a system in the compact set $X$.}
\end{split}
\end{equation} 
\medskip

Choosing in \eqref{contruction} $k$ sufficiently big we find that the operator
$\MP_k^*$ defines a contraction  of the complete
metric space $(\MP(X), \| \cdot\|_L^*)$. So it has a unique fixed point $\mu_*\in\MP(X)$. Then
$\MP_k^* \MP_1^* \mu_* =  \MP_1^* \mu_*$,  so $ \MP_1^* \mu_*=\mu_*$ by the uniqueness, i.e. $\mu_*$ is
a stationary measure for the Markov chain, defined by \eqref{RDS1} on $X$. Due to  \eqref{contruction}  it is unique. Since
$X=X_{\K, R_*}$, then formally this measure depends on $R_*$, $\mu_* = \mu_*(R_*)$. But the measure
$ \mu_*(R^0_*)$ (see \eqref{R_*})
is stationary for the system on $X_{\K, R_*}$ for any $R_*\ge R_*^0$,  so $ \mu_*(R^0_*)=  \mu_*(R_*)$
by the uniqueness. Denoting this measure by $ \mu_*$ we see that it is stationary for the system \eqref{RDS1}, considered on
$H$, and derive from \eqref{contruction} (with a suitable $R_*$) that
\be\label{LL}
\| P_k(u, \cdot) -\mu_*\|_L^* \to 0 \quad\text{as}\quad k\to\infty,
\ee
for every $u\in H$.
Assume that $\mu'$ is another stationary measure for the system on $H$. Then for any bounded Lipschitz function $f$ on $H$
we have
$$
\lan f, \mu'\ran = \lan f, \MP_k^* \mu' \ran = \lan \MP_k f, \mu'\ran = \int_H \lan f, P_k(u, \cdot )\ran \, \mu'(du) \to \lan f,\mu_*\ran
$$
by the Lebesgue theorem and \eqref{LL}. So $\mu' = \mu_*$. We finalize our analysis of eq.~\eqref{RDS1} in the following
theorem:

\begin{thm}\label{t_final}
Under the assumptions (A1), (H1), (H2) and (H3) equation \eqref{RDS1} defines in $H$ a Markov chain which has a unique
stationary measure $\mu_*$. This measure is supported by the ball $B_H(R_*^0)$, and if $\mu\in \MP(H)$ is
supported by a	 ball $B_H(R_*)$, $ R_*\ge R_*^0$, then
\be\label{conv}
\| \MP_k^*\mu - \mu_* \|_L^* \le C\kappa^k, \quad \forall\, k\ge1,
\ee
where $C>0$ and $\kappa\in(0,1)$ depend on $R_*$. If $\nu$ is any measure in  $ \MP(H), $  then 
\be\label{last_stat}
 \MP_k^* \nu \strela \mu_*\quad  \ \text{as }k \rightarrow \infty,
\ee
where $\strela$ signifies the weak convergence of mesures.

\end{thm}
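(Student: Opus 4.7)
My plan is to collect the pieces already developed in the paragraph preceding the statement, and then extend the quantitative convergence available for compactly-supported initial data to the weak convergence \eqref{last_stat} for an arbitrary $\nu \in \MP(H)$. Corollary \ref{c_1}, combined with the deterministic invariance of $B_H(R_*^0)$ under $S(\cdot,\eta)$ afforded by (H2), does essentially all of the work.

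For existence, uniqueness, support of $\mu_*$, and the rate \eqref{conv}, I would argue as follows. The compact set $X = X_{\K, R_*^0}$ is forward-invariant for the RDS by \eqref{hihi}, so Corollary \ref{c_1} shows that, for $k$ large enough, $\MP_k^*$ is a strict contraction of $(\MP(X), \|\cdot\|_L^*)$. The Banach fixed-point theorem then yields a unique fixed measure $\mu_* \in \MP(X) \subset \MP(B_H(R_*^0))$, which is stationary for the chain on $X$ and, by the invariance of $B_H(R_*^0)$, stationary on all of $H$. Uniqueness as a stationary measure on $H$ follows from \eqref{LL} and dominated convergence applied to any other candidate $\mu'$, exactly as spelled out in the excerpt. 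The rate \eqref{conv} is then a direct application of \eqref{contruction} to $\mu$ and $\mu_*$, both of which are supported in $B_H(R_*)$ once $R_* \ge R_*^0$.

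The only ingredient that is not already packaged is \eqref{last_stat} for $\nu$ without compact support. Here I would exploit the fact that \eqref{LL} already holds pointwise in $u \in H$ (obtained in the excerpt by applying \eqref{contruction} to $\delta_u$ and $\mu_*$ with any $R_* \ge \|u\|_H \vee R_*^0$). For any $f \in C_b(H)$, the functions $u \mapsto \lan f, P_k(u,\cdot)\ran$ are uniformly bounded by $\|f\|_\infty$ and converge pointwise to the constant $\lan f, \mu_*\ran$; since $\nu$ is a probability measure, the Lebesgue dominated convergence theorem yields
$$
\lan f, \MP_k^* \nu\ran \;=\; \int_H \lan f, P_k(u,\cdot)\ran\, \nu(du) \;\longrightarrow\; \lan f, \mu_*\ran,
$$
which is exactly \eqref{last_stat}. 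No separate tightness argument for the family $\{\MP_k^*\nu\}$ is needed.

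Since the heavy lifting lives in Theorem \ref{mainthm} and Corollary \ref{c_1}, I do not foresee a substantive obstacle. The one delicate point — which would be a real obstruction if Corollary \ref{c_1} were weaker — is that the constants $C$ and $\kappa$ there depend on $R_*$, so one cannot expect a uniform exponential rate in \eqref{last_stat}. Passing through the pointwise convergence \eqref{LL} and dominated convergence avoids this issue cleanly and delivers the qualitative statement without any further estimate on the distribution of $\|v\|_H$ under $\nu$.
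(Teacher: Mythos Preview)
Your argument is correct. The treatment of existence, uniqueness, the support of $\mu_*$, and the rate \eqref{conv} matches the paper's own reasoning essentially verbatim.

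For \eqref{last_stat} you take a different route from the paper. The paper truncates: it sets $\nu_R(\cdot)=\nu(\cdot\cap B_H(R))/\nu(B_H(R))$, uses that $\|\nu-\nu_R\|_{var}\to0$, and then estimates
\[
\|\MP_k^*\nu-\mu_*\|_L^*\le 2\|\nu-\nu_R\|_{var}+C_R\kappa_R^k,
\]
choosing first $R$ large and then $k$ large. Your approach instead exploits the pointwise convergence \eqref{LL} directly and passes the limit through $\int_H\langle f,P_k(u,\cdot)\rangle\,\nu(du)$ by dominated convergence. This is the same mechanism the paper already uses to prove uniqueness of $\mu_*$ (for bounded Lipschitz $f$), and you are right that it extends immediately to all $f\in C_b(H)$ because convergence in $\|\cdot\|_L^*$ on a Polish space implies weak convergence. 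Your route is shorter and avoids introducing the auxiliary measures $\nu_R$; the paper's truncation, on the other hand, yields the slightly sharper statement $\|\MP_k^*\nu-\mu_*\|_L^*\to0$ (convergence in the dual-Lipschitz metric, not only weak convergence), though this is not claimed in the theorem.
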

\begin{proof}

The first part of the theorem follows from what was said above if we note that the measure $\MP_1^*\mu$ is supported by the compact set
$
X_{\K, R_*\vee R_*^0},
$
so \eqref{conv} is a consequence of   \eqref{contruction} with $\mu := \MP_1^*\mu$ and $\nu :=\mu_*$. Now consider any  $\nu\in\MP(H).$ For $R$ large, define a probability measure supported by $B_H(R)$: 
 $$
 \nu_R(\cdot):= \frac{1}{\nu(B_{H}(R))} \nu(\cdot \cap B_H(R)).
 $$
  In view of  Ulam's theorem  $\nu_R $ converges to $\nu$ in  the total variation distance. We have 
\begin{eqnarray*}
\|\MP_k^* \nu- \mu_* \|_L^* &\leq &  \|\MP_k^* \nu- \MP_k^*\nu_R \|_L^* +  \|\MP_k^* \nu_R- \mu_* \|_L^*\\
& = & \sup_{ | f|_{L(H)}\le1 }\ \lan \MP_k f,   \nu-  \nu_R \ran + C_R \kappa_R^k 
\, \le \,  2 \| \nu-\nu_R  \|_{var} + C_R \kappa_R^k.
\end{eqnarray*}
 Choosing first $R$  so big that the first term in the r.h.s. 
  goes to zero and then $k$ so large such that the second term goes to null we see that the l.h.s. converges to zero with $k$.
This implies \eqref{last_stat}. 
\end{proof}

Note that since the theorem deals with initial data, supported by a ball $B_H(R_*)$, then Remark~\ref{r_new_norm} applies and
it suffices to check Assumption~(H2) in the weaker form, specified in that remark. 
\medskip

%
%

\subsection{Main lemma and proof of Theorem \ref{mainthm}}
In this section we state the  main technical  lemma and derive from it Theorem \ref{mainthm}.
Based on  \eqref{attention} we regard \eqref{RDS1} as a system on the compact set $X=X_{\K, R_*}$.

\begin{lem}\label{mainlemma}
Under the assumptions (A1), (H1), (H2), (H3), there exist constants $C>0$, $\delta\in(0,1]$ and a continuous
mapping $\Phi: X \times H \times \K \rightarrow E$ such that $\Phi(u,u';\eta)=0$ if $u=u'$, and  the mapping
$$
\Psi(u,u';\eta):=\eta + \Phi(u,u';\eta)
$$
satisfies
\begin{eqnarray}
\|\ell - \Psi_*(u,u';\cdot)\ell\|_{var } &\leq& C \|u-u'\|_H, \label{3.8}\\
\|S(u,\eta)-S(u',\Psi(u,u';\eta))\|_H &\leq& \tfrac12 \|u-u'\|_H,  \label{3.9} 
\end{eqnarray}
 for all $u,u' \in D_\delta :=\{(u,u')\in X \times H : \|u-u'\|_H \leq \delta \}$ and all $\eta\in \K$.
\end{lem}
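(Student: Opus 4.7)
My plan is to construct the shift $\Phi$ by approximately inverting the derivative $A(u,\eta):=D_\eta S(u,\eta)$ on a fixed finite-dimensional subspace $H_N\subset H$, arranging the output to lie in a fixed finite-dimensional $E_M\subset E$. Then the map $\Psi=\eta+\Phi$ will pull the trajectory from $u'$ close to the one from $u$ (Taylor contraction \eqref{3.9}), while the finite-dimensional reduction will let the total-variation bound \eqref{3.8} follow from a change-of-variables estimate against the product law $\ell$. Concretely, fix a large $N$ and let $\Pi_N$ be orthogonal projection onto some $N$-dim subspace $H_N\subset H$. By (H3), $\mathrm{Im}\,A(u,\eta)$ is dense in $H$; since $\bigcup_M E_M$ is dense in $E$ with $E_M:=\mathrm{span}(e_1,\ldots,e_M)$ and $A$ is bounded by (A1), for every $(u,\eta)$ the space $A(u,\eta)E_M$ is $\varepsilon$-dense in $H_N$ for $M$ large. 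Combining this with continuity of $A$ from (A1) and compactness of $X\times\K$, a local-approximation/partition-of-unity construction yields, for any $\varepsilon>0$, an $M=M(\varepsilon,N)$ and a Lipschitz map
\[
R_\varepsilon:X\times\K\to\ML(H_N,E_M),\quad \|A(u,\eta)R_\varepsilon(u,\eta)-\mathrm{id}_{H_N}\|_{\ML(H_N,H)}\le\varepsilon,\ \|R_\varepsilon\|\le C_\varepsilon.
\]
Then set $\Phi(u,u';\eta):=R_\varepsilon(u,\eta)\,\Pi_N(S(u,\eta)-S(u',\eta))$ and $\Psi:=\eta+\Phi$. By construction $\Phi$ is continuous, vanishes on $\{u=u'\}$, lies in $E_M$, and satisfies $\|\Phi\|_E\le C_\varepsilon\|u-u'\|_H$.

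\textbf{Contraction.} Writing $y:=S(u,\eta)-S(u',\eta)$ and Taylor-expanding $\eta\mapsto S(u',\eta)$ to second order,
\[
S(u',\eta+\Phi)-S(u,\eta) = -(y-\Pi_N y) + \bigl(A(u',\eta)R_\varepsilon(u,\eta)-\mathrm{id}_{H_N}\bigr)\Pi_N y + R_2,
\]
with $\|R_2\|_H\le C\|\Phi\|_E^2\le C\|u-u'\|_H^2$. The middle term is controlled by $(\varepsilon+CC_\varepsilon\|u-u'\|_H)\|u-u'\|_H$, using (A1) to compare $A(u,\eta)$ and $A(u',\eta)$. For the first term, (A1) gives $\|y\|_V\le L\|u-u'\|_H$, and compactness of the embedding $V\hookrightarrow H$ provides finite-rank orthogonal projections $\Pi_N$ with $\|y-\Pi_N y\|_H\le\kappa_N\|y\|_V$ and $\kappa_N\to 0$. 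Choosing first $N$ large, then $\varepsilon$ small, then $\delta$ small forces each piece below $\tfrac18\|u-u'\|_H$, yielding \eqref{3.9}.

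\textbf{Total variation.} In the coordinates $\eta=\sum_j b_j\xi_je_j$ the law factorises as $\ell=\bigotimes_j\rho_j(\xi_j)\,d\xi_j$, and since $\Phi\in E_M$ the map $\Psi$ fixes the coordinates $j>M$. Writing $\eta=(\eta',\eta'')$ with $\eta'\in E_M$ and conditioning on $\eta''$,
\[
\|\ell-\Psi_*\ell\|_{\mathrm{var}}\le \E_{\eta''}\bigl\|\ell^{(M)}-(T_{\eta''})_*\ell^{(M)}\bigr\|_{\mathrm{var}},\quad T_{\eta''}(\xi')=\xi'+(\phi_j/b_j)_{j\le M},
\]
where $\ell^{(M)}=\bigotimes_{j\le M}\rho_j$ and $\phi_j$ are the components of $\Phi$ in $E_M$. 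The Lipschitz dependence of $R_\varepsilon$ on $\eta$ together with the $C^2$-smoothness of $S$ from (A1) yield $\|T_{\eta''}-\mathrm{id}\|_{C^1(\R^M)}\le C\|u-u'\|_H$; combined with Lipschitz continuity of each $\rho_j$ from (H1), the standard finite-dimensional change-of-variables estimate gives conditional TV $\le C\|u-u'\|_H$ uniformly in $\eta''$, hence \eqref{3.8}.

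The main obstacle is the construction of $R_\varepsilon$: producing a map that is simultaneously jointly Lipschitz in $(u,\eta)$, uniformly operator-bounded, and with $A(u,\eta)R_\varepsilon(u,\eta)$ uniformly $\varepsilon$-close to $\mathrm{id}_{H_N}$, starting only from density (not surjectivity) of $\mathrm{Im}\,A(u,\eta)$. The partition-of-unity argument against a finite open cover of the compact set $X\times\K$ is the right tool, but it is where the proof needs genuine care; the rest is bookkeeping of the small parameters $N$, $\varepsilon$, $\delta$.
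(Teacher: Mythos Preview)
Your approach is correct and proves the lemma, but it organises the argument differently from the paper. The paper does not introduce the auxiliary finite-dimensional target $H_N\subset H$ or the projection $\Pi_N$; instead it constructs a single approximate right inverse $R_\varepsilon(u,\eta):H\to E_{M_\varepsilon}$ directly by Tikhonov regularisation, $R_\varepsilon=P_{M}A^*(AA^*+rI)^{-1}$, and obtains the uniform bound $\|A(u,\eta)R_\varepsilon(u,\eta)f-f\|_H\le\varepsilon\|f\|_V$ on all of $B_V(1)$ by two applications of Dini's theorem on the compact set $X\times\K\times B_V(1)$ (sending first $r\to0$, then $M\to\infty$). The shift is then $\Phi(u,u';\eta)=-R_\varepsilon(u,\eta)\,D_uS(u,\eta)(u'-u)$, linear in $u'-u$, and the compact embedding $V\hookrightarrow H$ enters only through the compactness of $B_V(1)$ needed for Dini. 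You instead front-load the compact embedding via $\|(I-\Pi_N)\|_{\ML(V,H)}=\kappa_N\to0$, which lets you work with an approximate inverse only on the finite-dimensional $H_N$ and build it by an elementary partition-of-unity argument, avoiding the spectral construction and Dini altogether at the price of the extra parameter $N$ and the glueing step you flag. The paper's explicit formula for $R_\varepsilon$ makes its Lipschitz dependence on $\eta$ an immediate consequence of (A1), whereas in your route this regularity must be arranged by hand through Lipschitz partition functions; both are fine. Step~3 (the total-variation bound) is essentially identical in the two proofs: factorise $\ell=\ell_M\otimes\ell_M^\perp$, observe that $\Psi$ acts as the identity on $E_M^\perp$ and as a Lipschitz perturbation of the identity on $E_M$ with Lipschitz distance $\le C_\varepsilon\|u-u'\|_H$, and apply the change-of-variables formula against the Lipschitz density $\rho_M$.
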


 Lemma \ref{mainlemma} will be proved in the next section. Now, supposing that  we have this result,  we will first establish
 a  coupling lemma and then derive from it   Theorem~\ref{mainthm}.

\begin{lem}\label{coupling}
For $\delta$ as in Lemma \ref{mainlemma}
 there exists $C_1=C_1(\delta)$, a probability space  $(\Omega_0, \MF_0, P_0)$
and a measurable mapping
$$
D_\delta   \times \Omega_0 \to \CK\times \CK, \quad
(u, u', {\omega_0}) \mapsto (\eta_1^{\omega_0}(u, u'), {\eta'_1}^{{\omega_0}}(u,u'))
$$
such that $\CD(\eta_1(u,u'))=\CD(\eta'_1(u,u'))=\ell$, and
$u_1:=S(u,\eta_1)$, $u'_1:=S(u', \eta'_1)$ satisfy
\begin{equation}
P\big(\|u_1-u'_1\|_H\leq \tfrac12 d\big)> 1 - C_1 d, \quad d=\|u-u'\|_H \le\delta. 
\end{equation}
\end{lem}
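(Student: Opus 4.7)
The plan is a standard maximal-coupling argument. Sample $\eta_1$ from $\ell$ and form the ``proxy'' $\hat\eta_1' := \Psi(u,u';\eta_1)$, which by definition has law $\nu_{u,u'} := \Psi_*(u,u';\cdot)\ell$. Then I would build $\eta_1'$ with marginal $\ell$ such that $\eta_1' = \hat\eta_1'$ with the maximal possible probability. By the maximal coupling of two probability measures, this probability equals $1 - \|\ell - \nu_{u,u'}\|_{var}$, which by \eqref{3.8} is at least $1 - C\|u-u'\|_H$. On the event $\{\eta_1' = \hat\eta_1'\}$ the estimate \eqref{3.9} gives
\[
\|S(u,\eta_1) - S(u',\eta_1')\|_H = \|S(u,\eta_1) - S(u',\Psi(u,u';\eta_1))\|_H \le \tfrac12 \|u-u'\|_H,
\]
so the desired conclusion follows with $C_1 := C$.

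To realize this explicitly I would take the reference measure $\lambda_{u,u'} := \ell + \nu_{u,u'}$, and define the densities $p_{u,u'} := d\nu_{u,u'}/d\lambda_{u,u'}$, $q_{u,u'} := d\ell/d\lambda_{u,u'}$, $m_{u,u'} := p_{u,u'} \wedge q_{u,u'}$, and $M_{u,u'} := \int m_{u,u'}\, d\lambda_{u,u'} = 1 - \|\ell - \nu_{u,u'}\|_{var}$. For $(\Omega_0, \MF_0, P_0)$ take a Polish space carrying three independent random variables: $\eta \sim \ell$, a uniform $t \in [0,1]$, and an auxiliary $\zeta$ used to realize residual distributions. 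Set $\eta_1^{\omega_0}(u,u') := \eta$ and
\[
{\eta_1'}^{\omega_0}(u,u') := \begin{cases} \Psi(u,u';\eta), & t \le m_{u,u'}(\Psi(u,u';\eta))/p_{u,u'}(\Psi(u,u';\eta)),\\ Z_{u,u'}(\zeta), & \text{otherwise,} \end{cases}
\]
where $Z_{u,u'}(\zeta)$ is a measurable realization of the probability measure with density $(q_{u,u'} - m_{u,u'})/(1 - M_{u,u'})$ with respect to $\lambda_{u,u'}$. A routine computation verifies that $\CD(\eta_1') = \ell$ and that $P(\eta_1' = \hat\eta_1') = M_{u,u'} \ge 1 - C\|u-u'\|_H$, which together with the first paragraph gives the claim.

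The main technical obstacle is the measurability of this construction jointly in $(u, u', \omega_0)$: the densities $p_{u,u'}, q_{u,u'}, m_{u,u'}$ and the residual distribution all vary with $(u,u')$. This is a standard measurable-selection issue. Since $\Psi$ is continuous on $X \times H \times \K$ and $\K$ is compact, the map $(u,u') \mapsto \nu_{u,u'}$ is continuous in the weak topology, so Borel versions of the Radon--Nikodym derivatives on $\K$ can be chosen measurably in $(u,u')$, and the residual probability measure can be produced by a measurable inverse-c.d.f.\ transport of a single Uniform$[0,1]$ variable into the compact metric space $\K$. Modulo these bookkeeping details, which are handled exactly as in the coupling constructions of \cite{KS}, the argument above proves the lemma.
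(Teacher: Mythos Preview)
Your proposal is correct and follows the same strategy as the paper: transport $\eta_1\sim\ell$ through $\Psi$ to obtain a proxy with law $\Psi_*\ell$, then correct this law to $\ell$ via a maximal coupling so that the proxy coincides with the corrected variable with probability $\ge 1-Cd$, and on that event invoke \eqref{3.9}. The only difference is packaging: where you build the maximal coupling by an explicit accept/reject scheme, the paper first invokes the Dobrushin lemma for a maximal coupling $(\tilde\eta,\tilde\eta')$ of $(\ell,\Psi_*\ell)$ and then the Gluing lemma (from \cite{Vil03}) to merge it with the deterministic coupling $(\eta,\Psi(u,u';\eta))$ into a single triple $(\zeta_1,\zeta_2,\zeta_3)$, outputting $(\eta_1,\eta_1')=(\zeta_1,\zeta_3)$; the joint measurability in $(u,u')$ that you flag is handled there by citing \cite{KS}, Section~1.2.4, and \cite{KNS}, Appendix~3, which is slightly cleaner than verifying it by hand for your parameter-dependent Radon--Nikodym densities.
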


\begin{proof}
Recall that $\eta$ has the law  $\ell.$  Denote $\eta'=\Psi(u,u',\eta)$ and $ \ell' =\CD \eta'$.
Then according to Lemma~\ref{mainlemma},
\begin{eqnarray}\label{3}
&&\|S(u, \eta)-S( u',\eta')\|_H \leq \tfrac12 d, \ a.s.\\
&&\|\ell- \ell' \|_{var} \leq C_1 d.\label{3.1}
\end{eqnarray}
The law $\ell'$ of $\eta'$ need not to be $\ell$. To improve this, note that in view of \eqref{3.9} and the
 Dobrushin lemma (see in \cite{KS}), there exists a coupling $(\teta,\teta')$ for $(\ell, \ell')$, 
  defined on a probability space $(\Omega_1,\MF_1,P_1)$, satisfying
\begin{equation}\label{44}
P_1(\teta\neq \teta')= \| \ell - \ell'  \|_{var} \leq C_1 d.
\end{equation}
Moreover, the mapping $(\omega_1, u, u')\mapsto (\tilde\eta, \tilde\eta')$ may be chosen to be measurable, see \cite{KS},
Section~1.2.4. For the pair of measures $(\ell, \ell')$ we have obtained two couplings -- $(\eta, \eta')$ and
$(\tilde\eta, \tilde\eta')$.

 According to the \textit{Gluing lemma} (see \cite{Vil03}), there exist random variables $(\zeta_1,\zeta_2,\zeta_3)$ defined on a probability space
 $(\Omega_0,\MF_0,P_0)$ such that
$$
\MD(\zeta_1,\zeta_2) {=} \MD(\eta,\eta') ,\ \  \ \MD(\zeta_2,\zeta_3) {=}\MD(\teta',\teta) .
$$
The triplet  $(\zeta_1, \zeta_2, \zeta_3)$ may be chosen to be a measurable function of $(\omega_0, u, u')$, see
\cite{KNS},~Appendix~3.
In particular,  $\MD(\zeta_1)=\MD(\zeta_3)=\ell $ and by inequality \eqref{3}
$$
\|S(u,\zeta_1)-S(u',\zeta_2)\|_H \leq \tfrac12 d, \ \ P_0-a.s.
$$
Furthermore, in view of  inequality \eqref{44},
$$
P_0(\zeta_2 \neq \zeta_3) =P_1(\teta'\ne \teta)
\leq C_1 d,
$$
which implies that
$$
P_0\big(\|S(u,\zeta_1)-S(u',\zeta_3)\|_H \leq \tfrac12 d\big)\geq 1-C_1 d.
$$
To complete the proof it remains  to choose $\eta_1=\zeta_1$ and  $\eta'_1=\zeta_3$.

\end{proof}

To obtain the exponential mixing, claimed by Theorem \ref{mainthm}, we will apply
the method of   Kantorovich functional (see in \cite{KS}). A  Kantorovich functional for measures on $X$
is a symmetric function
$
K: \MP(X) \times \MP(X) \to \R_+,
$
defined via its density $f_K$. The latter  is a  symmetric measurable function $f_K$ on $X\times X$, satisfying
$$
f_K(\xi_1,\xi_2) \geq \|\xi_1- \xi_2\|_{d_0} \quad \forall\, \xi_1,\xi_2\in X,
$$
 (see \eqref{d_dist}), and   $K$ is defined
in terms of  $f_K$ by the relation
$$
K(\mu_1, \mu_2) =\inf\{\E f_K(\xi_1, \xi_2)\},
$$
where the infimum is taken over all couplings $(\xi_1, \xi_2)$ for $(\mu_1, \mu_2)$.
\medskip

For $R_*$ as above and $u,u'\in B_H(R_*)$ we denote $d=\| u-u'\|_H$ and set
\be\label{new_rule}
 (u_1, u'_1)(\omega_0)
 =
   \left\{\begin{array}{ll}
 (S(u,\eta), S(u',\eta)),
& d>d_0 \,,
 \\
  (S(u,\eta_1), S(u',\eta'_1)),   & d\le d_0\,,
\end{array}\right.
\ee
where $\eta_1(\omega_0)$,  $\eta'_1(\omega_0)$  are as in Lemma \ref{coupling}, and $\eta$ is a r.v. on $\Omega_0$, distributed
as $\ell$, and independent  from $\eta_1, \eta'_1$. Clearly
$(u_1, u_1') $ is a measurable function of $(u, u', \omega_0)$ and $\CD(u_1) =  P_1(u, \cdot)$ and $\CD(u'_1) =  P_1(u', \cdot)$.

\begin{lem}\label{1stepcontract}
There exist $d_0>0$ and a function
\be\label{about_f}
f:(\tfrac12d_0, R_*] \to (2d_0, 3d_0]\quad \text{non-decreasing},
\ee
such that for any $\xi_1, \xi_2 \in X,$
\be\label{f_K}
f_K(\xi_1, \xi_2)=\left\{
\begin{array}{ll}
\|\xi_1 - \xi_2\|_H, & \text{if }\|\xi_1-\xi_2\|_H\leq d_0,\\
f(\|\xi_1\|_H\vee \|\xi_2\|_H), & \text{if }\|\xi_1-\xi_2\|_H > d_0,
\end{array}
\right.
\ee
is  a Kantorovich density. It satisfies
\begin{equation}\label{onestep}
\E\left[f_K(u_1,u'_1)\right]\leq \kappa f_K(u,u') \ \ \text{ for any }u,u' \in B_H(R_*),
\end{equation}
where $\kappa\in(0,1)$ is a constant, depending on $R_*$.
\end{lem}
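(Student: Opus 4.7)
The plan is to split according to the two cases in \eqref{new_rule} --- $d:=\|u-u'\|_H\le d_0$ and $d>d_0$ --- find a one-step contraction rate in each, and take $\kappa$ to be the worse of the two. For any non-decreasing $f$ with image in $(2d_0,3d_0]$, the density $f_K$ defined in \eqref{f_K} is automatically symmetric, measurable, and dominates $\|\cdot\|_{d_0}$: when $\|\xi_1-\xi_2\|_H\le d_0$ it equals $\|\xi_1-\xi_2\|_H=\|\xi_1-\xi_2\|_{d_0}$, and when $\|\xi_1-\xi_2\|_H>d_0$ it equals $f(\|\xi_1\|_H\vee\|\xi_2\|_H)>2d_0>d_0=\|\xi_1-\xi_2\|_{d_0}$; so $f_K$ is a valid Kantorovich density. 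In Case~1 I invoke Lemma~\ref{coupling}: on the event $\{\|u_1-u'_1\|_H\le d/2\}$, of probability $\ge 1-C_1 d$, one has $\|u_1-u'_1\|_H\le d_0/2<d_0$, so $f_K(u_1,u'_1)=\|u_1-u'_1\|_H\le d/2$; on the complement $f_K(u_1,u'_1)\le 3d_0$ trivially. Therefore
\[
\E f_K(u_1,u'_1)\le \tfrac d2+3d_0 C_1\,d=\bigl(\tfrac12+3d_0 C_1\bigr)\,f_K(u,u'),
\]
which gives a contraction with rate $\kappa_1:=\tfrac12+3d_0 C_1<1$ provided $d_0<1/(6C_1)$; this is the first constraint on $d_0$.

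\smallskip

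In Case~2 the two trajectories share the same noise $\eta$, so $u_1=S(u,\eta)$ and $u'_1=S(u',\eta)$. A direct check shows $f_K(u_1,u'_1)\le f(M_1)$ almost surely, where $M_1:=\|u_1\|_H\vee\|u'_1\|_H$: if $\|u_1-u'_1\|_H\le d_0$ then $f_K=\|u_1-u'_1\|_H\le d_0<2d_0\le f(M_1)$; otherwise $f_K=f(M_1)$ by definition. Setting $M:=\|u\|_H\vee\|u'\|_H\ge d/2>d_0/2$, the task reduces to proving $\E f(M_1)\le \kappa_2\, f(M)$ for all $M\in(d_0/2,R_*]$ and some $\kappa_2\in(0,1)$. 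I would construct $f$ as a smooth, non-decreasing function with image in $(2d_0,3d_0]$ and establish this bound from two ingredients: (i) the almost-sure control $M_1\le\gamma M+\beta$ from \eqref{linearbound}, which by monotonicity gives $f(M_1)\le f(\gamma M+\beta)<f(M)$ whenever $\gamma M+\beta<M$, i.e.\ whenever $M>R^0_*=\beta/(1-\gamma)$; and (ii) the probabilistic shrinkage from \eqref{contract1}: for the component $u_*\in\{u,u'\}$ realizing $\|u_*\|_H=M\ge d_0/2$ and any prescribed $a\in(\gamma,1)$, the event $\{\|S(u_*,\eta)\|_H<aM\}$ has probability $\ge p_{d_0/2}>0$, yielding a further stochastic decrease of $M_1$ when $M>\beta/(a-\gamma)$.

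\smallskip

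The main obstacle is the low-$M$ regime $M\in(d_0/2,R^0_*]$: there $\gamma M+\beta\ge M$, so the a.s.\ dissipation does not force $M_1<M$; and $aM\le \gamma M+\beta$ for every $a\in(\gamma,1)$, so \eqref{contract1} does not improve the a.s.\ bound either. Handling this range requires an $R_*$-dependent design of $f$ --- for instance essentially constant on a low plateau and strictly increasing near $R_*$ --- combined with the additional gain extracted from the sub-subcase $\|u_1-u'_1\|_H\le d_0$, where $f_K$ drops to at most $d_0\ll f(M)$; uniform positivity of the probability of this sub-subcase must be extracted from the nondegeneracy in \eqref{contract1} and continuity of $S$ on the compact set $X\times\CK$. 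Taking $\kappa=\max(\kappa_1,\kappa_2)\in(0,1)$ then completes the proof, with both $d_0$ and $\kappa$ depending on $R_*$ through the balance between the Case~1 upper bound $d_0<1/(6C_1)$ and the dissipation scale $R^0_*$.
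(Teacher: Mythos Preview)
Your Case~1 ($d\le d_0$) is fine and matches the paper. The gap is in Case~2, precisely in the ``low-$M$'' regime $M\in(d_0/2,R^0_*]$ that you correctly flag as the obstacle.

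The error is in how you invoke \eqref{contract1}. You apply it only to the component $u_*$ with $\|u_*\|_H=M$, and then for the other component $u_{**}$ fall back on the a.s.\ bound $\|S(u_{**},\eta)\|_H\le\gamma M+\beta$. This indeed gives nothing new when $M\le R^0_*$, as you observe. But the reason \eqref{contract1} holds is that on the event $\{\eta\text{ close to }0\}$, which has positive probability by (H1), one has $\|S(v,\eta)\|_H$ close to $\|S(v,0)\|_H\le\gamma\|v\|_H$ for \emph{every} $v$ simultaneously. Hence with probability at least some $p=p_{d_0}>0$ both $\|S(u,\eta)\|_H<aM$ and $\|S(u',\eta)\|_H<aM$ hold at once, i.e.\ $M_1<aM$. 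This simultaneous version is exactly what the paper uses: it writes $P[R_1\le aR]>p$ and then bounds
\[
\E f_K(u_1,u'_1)\le p\,f(aM)+(1-p)\,f(\gamma M+\beta),
\]
which must be made $\le\kappa_2 f(M)$ by design of $f$. Your diagnosis that ``\eqref{contract1} does not improve the a.s.\ bound'' is therefore wrong; it does, once applied to both points.

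Your proposed workaround --- extracting uniform positivity of $P(\|u_1-u'_1\|_H\le d_0)$ from \eqref{contract1} and compactness --- is not justified: \eqref{contract1} controls norms, not differences, and nothing in the assumptions forces $\|S(u,\eta)-S(u',\eta)\|_H\le d_0$ with positive probability in a single step when $\|u-u'\|_H$ is of order $R_*$. Likewise, taking $f$ ``essentially constant on a low plateau'' cannot work: if $f(aM)\approx f(M)$ there, the inequality $p\,f(aM)+(1-p)\,f(\gamma M+\beta)\le\kappa_2 f(M)$ fails because $f(\gamma M+\beta)\ge f(M)$ in that range.

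The paper resolves this by an explicit step-function construction: partition $(d_0/2,R^0_*]$ into the intervals $I_n=(a^nR^0_*,a^{n-1}R^0_*]$, set $f\equiv a_n$ on $I_n$ with $a_1>a_2>\cdots>a_{N_0}>2d_0$ chosen so that $p\,a_{n+1}+(1-p)\,a_1<a_n$ for all $n$ (achieved by $a_n=a_1-(a_1-a_2)/p_1^{\,n}$ for some $p_1\in(0,p)$ and $a_1-a_2$ small), and extend $f$ affinely on $[R^0_*,R_*]$. You should replace your low-$M$ argument with this.
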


\begin{proof} We may assume that   $d_0 \leq \delta,$ where $\delta$ is introduced  in Lemma \ref{mainlemma}.
The fact that $f_K$ is   a Kantorovich density is obvious, as well as that it satisfies
\be\label{relation}
f_K(\xi_1, \xi_2) \le 3 \| \xi_1- \xi_2\|_H.
\ee
It remains to verify \eqref{onestep}.
 Let us denote
 $R_1(\omega_0):=\|S(u,\eta)\|_H \vee \|S(u',\eta')\|_H$, $R:=\|u\|_H \vee\|u'\|_H $. Recall that  $d=\|u-u'\|_H$.\\
{\bf Case 1: }$d>d_0$.\\
Now $ \tfrac12 d_0 \le R\le R_*$. Let as take $a=\frac{1+\gamma}2$.
According to  \eqref{contract1} and \eqref{linearbound},
$$
P[R_1 \leq a R ] > p_{d_0}=:p>0, 
\qquad
R_1 \leq   \gamma  R+\beta ,\ \ a.s.  
$$
Therefore
$$
\E[f_K(u_1,u'_1)] \leq p f(aR) + (1-p) f(\gamma R+\beta).
$$
To obtain inequality \eqref{onestep}, we need to find  $f$ of the form \eqref{about_f}, satisfying
\begin{equation}\label{8}
p f(aR) + (1-p) f(\gamma R+\beta) \leq \kappa_1 f(R)\quad \text{for}\quad R\in (\tfrac12 d_0, R_*],
\end{equation}
with some $\kappa_1 \in (0,1).$  If $R\ge R^0_* =\beta/(1-\gamma)$, then  $\gamma R+\beta \le R$, so \eqref{8} holds with some
$\kappa_1=\kappa_1(f) <1$ if in addition to \eqref{about_f} we assume that
\be\label{ab_f}
f \quad \text{is strictly increasing on $\; [R^0_*, R_*]\ $ and $f(aR^0_*)< f(R^0_*)$.
}
\ee
For any fixed $d_0$ with $d_0<2R^0_*= {2\beta}/(1-\gamma)$, it remains to  consider $f$ on the segment   $J:=(\frac12 d_0, R^0_*]$.

Consider the segments
$$
I_1=(aR^0_*, R^0_*], \;I_2=(a^2R^0_*, a R^0_*], \dots,
I_n=(a^nR^0_*,a^{n-1}R^0_*], \dots
$$
Then
$
\cup_{j=1}^{N_0} I_j \supset  J,
$
where
$
 N_0 =\big[\frac{\ln2R_0 + \ln d_0^{-1} }{\ln a^{-1}} \big]+1.
$
We define the function $f$ on $J$ by  relation
$$
f(x)= a_j,\ \ x\in I_j, \quad  1\le j\le N_0,
$$
where
\be\label{a_j}
2.5 d_0= a_1> a_2 >\dots > a_{N_0}>0
\ee
is a sequence to be defined. If $R\in I_1$, then  \eqref{8} is valid if
$$
a_1 > p a_2 + (1-p) a_1,
$$
which holds true since $a_1>a_2$.  If $R\in I_{n}$,  $n \ge 2$, then \eqref{8} holds if
\begin{equation}\label{10.2}
a_{n-1}>p a_n +(1-p) a_1.
\end{equation}
Let us take some $p_1\in (0, p)$ and define the sequence $\{a_j\}$ by  relation
$$
a_n:=a_1- \frac{a_1-a_2}{p_1^n}.
$$
If $a_2<a_1$ is chosen sufficiently close to $a_1$, then \eqref{a_j} holds. Relation \eqref{10.2} holds as well and
 \eqref{onestep} is proved for $d>d_0$
with some $\kappa = \kappa_1<1$.

Choosing $a_1-a_2>0$ sufficiently small we achieve that $2d_0 \le f(x) \le 2.5 d_0$ on $J$. If
$R^0_*<R_*$, then we extend $f(x)$ to an affine function on $[R_0, R_*]$, equal to $3d_0$ at $x =R_*$. We have arrived at
a function $f$, satisfying \eqref{about_f} and  \eqref{ab_f}, such that  \eqref{relation} holds if $d>d_0$. 
\medskip

\noindent{\bf
Case 2.} $d\leq d_0.$\\
Denote $D=\|u_1-u'_1\|_H$. Since  $d_0 \leq \delta,$ then according to Lemma \ref{coupling},
$$
\PP\big(D\leq \tfrac12 d\big) \geq 1-C_1 d\quad\text{if}\quad d\le d_0.
$$
In view of \eqref{about_f},
$$
\E[f_K(u_1,u'_1)] \leq C_1 d (3d_0) + (1-C_1 d)(\tfrac12 d) \le \tfrac12 d+ (\tfrac52 C_1 d) d_0.
 $$
Choosing $d_0\le 1/(10C_1) $ we achieve that \eqref{onestep} holds with $\kappa= 3/4$.

Now the lemma is proved by choosing $d_0\le 1/(10C_1) \wedge {2\beta}/{(1-\gamma)}$ with $\kappa = \kappa_1\vee (3/4)$.
\end{proof}

\begin{proof}[Proof of Theorem \eqref{mainthm}]
Note that for any $f \in C_b(H)$ with $|f|_{L(H)} \le 1$ (see \eqref{f_lip}) and any measures $\mu,\nu\in\MP(H)$ we have
\begin{eqnarray*}
\Big|\int_H f(x)(\mu -\nu)(dx) \Big|= \left| \E\left[ f(\xi)-f(\zeta) \right] \right| \leq \E \|\xi-\zeta\|_H,
\end{eqnarray*}
where $(\xi,\zeta)$ is a coupling for $\mu,\nu.$ Therefore
$$
\|\mu-\nu \|_L^* \leq \inf_{(\xi,\zeta)} \E \|\xi-\zeta \|_H,
$$
where the infimum is taken over all couplings $(\xi,\zeta)$ for $(\mu,\nu).$ If $\mu, \nu$ are supported by  $X$, then the couplings are valued in $B_H(R_*)$ a.s., and due to  \eqref{f_K} and \eqref{about_f}
we have that
$$
\| \xi^\omega- \zeta^\omega\|_H \le \frac{R_*}{d_0} f_K(\xi^\omega, \eta^\omega) \quad \text{a.s.}
$$

For $u, u'\in X$ and $k\ge1$ let $(u_k, u'_k)$ be any fixed 
 coupling for $(P_k(u, \cdot), P_k(u', \cdot)$.  Then
\be\label{controllip}
\|P_k(u,\cdot)- P_k(u',\cdot)\|_L^*  \leq  \inf_{ \xi, \zeta} \E \| \xi-\zeta \|_H
\leq  \frac{R_*}{d_0} \inf_{\xi, \zeta}  \E f_K(\xi, \zeta)
\leq   \frac{R_*}{d_0}    \E f_K(u_k, u'_k),
\ee
where, as above, $(\xi, \zeta)$ is a coupling for the two measures.

 Denote $\mu_k=P_k(u,\cdot)$ and $\mu'_k=P_k(u',\cdot)$. We claim that for any $k\geq 1$ there exists a coupling $(u_k, u'_k)$
 for the measures $(\mu_k, \mu'_k)$, defined  on a probability space $(\Omega^{k}, \MF^{k}, P^{k})$,   such that
\begin{equation}\label{28}
\E^{k}\left[ f_K(u_k, u'_k) \right] \leq \kappa^k  f_K(u,u'),
\end{equation}
where $\kappa \in (0,1).$ 

For $k=1$ this is the assertion of Lemma \ref{1stepcontract}.
Now let $k\ge2$. Consider a coupling $(u_{k-1}(\omega_{k-1}), u'_{k-1}(\omega_{k-1}))$ for $(\mu_{k-1}, \mu'_{k-1})$,
defined on a probability space $(\Omega_{k-1}, \CF_{k-1}, P_{k-1})$ and satisfying \eqref{28} with ${k:=k-1}$,
existing by the base of induction. Let
$(\Omega_{0}, F_{0}, P_{0})$ be the probability space from Lemma~\ref{coupling}. We take for $\Omega_k$ the direct
product of these two spaces: $\ \Omega_k = \{\omega_k = (\omega_{k-1}, \omega_0)\}$, and set
$
(u_k, u'_k) = (u_1, u'_1)  (\omega_{k-1}, \omega_0),
$
where the pair $ (u_1, u'_1)$ is defined as in \eqref{new_rule} with $u=u_{k-1}(\omega_{k-1}), u'=u'_{k-1}(\omega_{k-1})$.

By \eqref{onestep} and the base of induction we have:
\begin{equation*}
\E f_K(u_k, u'_k) = \E^{k-1}\Big[ \E^0 f_K\big(u_1, u'_1\big)(u_{k-1}(\omega_{k-1}, u'_{k-1}(\omega_{k-1}))\Big]
\le\kappa \E^{k-1}f_K(u_{k-1}, u'_{k-1}) \le \kappa^k f_K(u,u'),
\end{equation*}
so \eqref{28} is proved.
By  \eqref{controllip}, \eqref{28} and \eqref{relation}
$$
\| P_k(u, \cdot) - P_k(u', \cdot)\|_L^* \le \frac{R_*}{d_0} f_K(u, u') \kappa^k\le
 \frac{3R_*}{d_0}  \| u- u'\|_H \kappa^k,
$$
which proves Theorem \ref{mainthm}.
\end{proof}

\subsection{Proof of the main lemma}

 We start with a result on almost-inverse linear operators, depending on a parameter, following closely Section 2.2 of \cite{KNS}. 
Let $H$, $E$ and $V$
be the  spaces as above and $X$, $\K$ be the compact sets as above. For the Hilbert base $\{e_j\}$ of $E$ as in (H1) and any 
 $M\ge1$ we 
denote by  $ E_M$ the subspace of $E$, generated by the first $M$ vectors of the base, and denote by $P_M$ the orthogonal 
 projection $E\to   E_M$.

\begin{prop}\label{rightinverse}
Let
$\
A: X \times \K \rightarrow \ML(E,H)
$
be a continuous mapping. Assume that
 for any $u\in X$ and $ \eta \in \K$ the operator
$A(u,\eta)$ has  dense image in $H.$
Then for any $\vep>0$ there exists $M_\vep \in \N$, $C(\vep)>0$
 and a continuous  mapping
$\
R_\vep: X \times \K \rightarrow \ML(H, E)
$
such that for any $u \in X, \eta \in \K,$
\begin{eqnarray}\label{inverse1}
 &&\text{Im} \left(R_\vep(u,\eta)\right) \subseteq  E_{M_\vep}, \quad  \|R_\vep(u,\eta)\|_{\ML(H,E)} \leq C(\vep);\\
 &&\|A(u,\eta) R_\vep(u,\eta)f-f\|_H \leq \vep \|f\|_V \ \ \forall f \in V.\label{inverse2}
\end{eqnarray}

\end{prop}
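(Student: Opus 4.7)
The plan is to reduce the construction of $R_\vep$ to a finite-dimensional approximation problem on the $H$-side (using the compact embedding $V\hookrightarrow H$) together with a family of pointwise existence statements on the $E$-side (using the density hypothesis), and then to patch the pointwise data together continuously in $(u,\eta)$ by a partition-of-unity argument, which works because $X\times\K$ is compact.

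Concretely, I would first fix a small auxiliary parameter $\delta>0$ to be specified at the end. Since $V\hookrightarrow H$ is compact, $B_V(1)$ is totally bounded in $H$, so I can pick a finite $\delta$-net $h_1,\dots,h_N\in H$ of $B_V(1)$, let $W=\mathrm{span}(h_1,\dots,h_N)$ with orthonormal basis $g_1,\dots,g_N$, and let $\pi:H\to W$ be the orthogonal projection. By homogeneity, $\|f-\pi f\|_H\le\delta\|f\|_V$ for every $f\in V$. Next, for each fixed $(u_0,\eta_0)\in X\times\K$ and each $i=1,\dots,N$, the density of $\mathrm{Im}\,A(u_0,\eta_0)$ in $H$, combined with the density of $\bigcup_M E_M$ in $E$ and the continuity of $A(u_0,\eta_0):E\to H$, lets me choose $M^{(u_0,\eta_0)}\in\N$ and $\phi_i^{(u_0,\eta_0)}\in E_{M^{(u_0,\eta_0)}}$ with $\|A(u_0,\eta_0)\phi_i^{(u_0,\eta_0)}-g_i\|_H\le\delta$; by joint continuity of $A$, this estimate persists with $2\delta$ on some neighbourhood $U^{(u_0,\eta_0)}\subset X\times\K$.

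I would then cover the compact set $X\times\K$ by finitely many such neighbourhoods $U^{(1)},\dots,U^{(K)}$, take a continuous partition of unity $\{\chi_\alpha\}_{\alpha=1}^K$ subordinate to it, set $M_\vep:=\max_\alpha M^{(\alpha)}$, and define
\[
\phi_i(u,\eta):=\sum_{\alpha=1}^K\chi_\alpha(u,\eta)\,\phi_i^{(\alpha)}\in E_{M_\vep},\qquad R_\vep(u,\eta)h:=\sum_{i=1}^N\lan\pi h,g_i\ran_H\,\phi_i(u,\eta).
\]
The operator $R_\vep(u,\eta)$ is linear in $h$, continuous in $(u,\eta)$, has image in $E_{M_\vep}$, and by Cauchy--Schwarz its $\ML(H,E)$-norm is at most $\sqrt N\,\max_{\alpha,i}\|\phi_i^{(\alpha)}\|_E=:C(\vep)$, giving \eqref{inverse1}. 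Convexity of the norm together with the support property of $\chi_\alpha$ yields $\|A(u,\eta)\phi_i(u,\eta)-g_i\|_H\le 2\delta$ uniformly, and combining this with the embedding constant $C_0$ of $V\hookrightarrow H$ gives
\[
\|A(u,\eta)R_\vep(u,\eta)f-f\|_H\le\bigl(2\delta\sqrt N\,C_0+\delta\bigr)\|f\|_V\qquad\forall f\in V,
\]
so fixing $\delta$ sufficiently small as a function of $\vep$, $N$ and $C_0$ yields \eqref{inverse2}.

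The most delicate point I anticipate is the continuity of the family $(u,\eta)\mapsto\phi_i(u,\eta)$: the density hypothesis only delivers a pointwise preimage of each $g_i$, and without the partition-of-unity gluing one would end up at best with a measurable selection. It is precisely the compactness of $X\times\K$ that converts pointwise existence into a uniform finite-rank approximation and guarantees that a single finite $M_\vep$ works for the entire parameter space.
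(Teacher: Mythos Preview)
Your argument is essentially correct but takes a genuinely different route from the paper. The paper constructs $R_\vep$ by Tikhonov regularization: it sets $G(u,\eta)=A(u,\eta)A^*(u,\eta)$, defines $R_r=A^*(G+rI)^{-1}$ and $R_{r,M}=P_M R_r$, and then chooses $r$ and $M$ so that $\|AR_{r,M}f-f\|_H\le\vep\|f\|_V$ on $B_V(1)$. The uniformity in $(u,\eta)$ comes from two applications of Dini's theorem on the compact set $X\times\K\times B_V(1)$: first to show $\sup\|(AR_r-AR_{r,M})f\|_H\to0$ as $M\to\infty$, then to show $\sup\|AR_rf-f\|_H\to0$ as $r\to0$ (the latter using the spectral calculus of the positive operator $G$). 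This gives an explicit, canonical formula for $R_\vep$ that does not depend on any choices of nets or partitions. Your approach, by contrast, is more elementary and entirely finite-dimensional in spirit: you approximate on the $H$-side by a finite-rank projection coming from the compact embedding $V\hookrightarrow H$, and on the $E$-side by hand-picked preimages glued via a partition of unity. The price is that $R_\vep$ is non-canonical; the gain is that you avoid spectral theory and Dini altogether.

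One small circularity to fix: you use a single parameter $\delta$ both for the $\delta$-net of $B_V(1)$ (which determines $N=N(\delta)$) and for the accuracy $\|A\phi_i-g_i\|_H\le\delta$ of the approximate preimages. Your final bound $(2\delta\sqrt{N}C_0+\delta)\|f\|_V$ then depends on $\delta\sqrt{N(\delta)}$, and since covering numbers of compact sets can grow arbitrarily fast there is no a priori reason this product tends to zero. The remedy is to decouple the two choices: first pick $\delta_1<\vep/2$ for the net, fixing $N=N(\delta_1)$; then pick $\delta_2$ for the preimage accuracy so that $2\delta_2\sqrt{N}C_0<\vep/2$. With this adjustment your argument goes through.
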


\begin{proof}

For any $u\in X$ and $\eta \in \K$ define $G(u,\eta):= A(u,\eta) A^*(u,\eta)$, where $A^*$ is the adjoint operator. This is a non-negative
self-adjoint operator. Its kernel equals to that of $A^*$ and is trivial since $A$ has dense image. So
$
G = G^* >0,
$
and the operator  $(G+r I)^{-1}$ is well-defined and smooth in $\{r>0\}.$ For  $M\in\N$ denote
\begin{eqnarray*}
R_r(u,\eta):= A^*(u,\eta) (G(u,\eta)+ r I)^{-1},\quad
R_{r,M}(u,\eta):= P_M \circ R_r(u,\eta).
\end{eqnarray*}
For any $r$ and $M$ the  operator $R_{r,M}$ is continuous in $u$ and $\eta$.
 Below we will choose $M=M_\vep$ and $  r=r_\vep$ in
 such a way  that $R_\vep:=R_{r,M}$ satisfies \eqref{inverse1},\eqref{inverse2}.

Since $\sup_{X \times \K} \|A(u,\eta)\|<\infty$ and $\|(G+rI)^{-1}\|\leq r^{-1},$ then
$
\sup_{X \times \K} \|R_{r,M}(u,\eta)\| \leq C r^{-1},
$
which implies \eqref{inverse1}.

Now we will  prove \eqref{inverse2}. Doing that we may assume that $f\in B_V(1)$. Since $AR_r-AR_{r,M}= A(I-P_M)R_r$, then
$$
\| AR_r f- AR_{r,M}f\|_E \le C \| (I-P_M) R_r f\|_H =: CF_M(u,\eta, f),
$$
for all $(u,\eta, f) \in X\times \CK \times B_V(1) =: \frak A$. The functions $F_M$ are continuous on $\frak A$, and
$F_M\searrow 0$ as $M\to\infty$, pointwise. Since the set $\frak A$ is compact, this convergence is uniform by Dini's
theorem. That is,
$$
\sup_{\frak A}    
 \|AR_r f- AR_{r,M} f\| \to0\quad\text{as}\quad M\to\infty,
$$
for any $r>0$.
So  to obtain inequality \eqref{inverse2} it remains to   show  that
\begin{equation}\label{45}
\sup_{\frak A} \|AR_r f -f\|  \to 0 \quad \text{as} \quad r\to0.
\end{equation}
To prove this note that since
$
AR_r = A A^*(G+rI)^{-1} = G(G+rI)^{-1},
$
then
$$
\| AR_r f-f\|_E = r\| (G+rI)^{-1} f\|_E =: \Phi_r{(u, \eta, f}).
$$
Consider the family of continuous functions
$$
\Phi_r: \frak A \to \R_+, \quad r>0.
$$
Writing the operator $r(G+rI)^{-1}$ and the vector $f$ in terms of the spectral decomposition for the positive
self-adjoint operator $G(u,\eta)$, we get that the functions $\Phi_r$
pointwise monotonically converge  to zero as $r\to0$ (see \cite{KNS}, Lemma~2.4, for details).

 Since $\frak A$ is a compact set, then evoking again  Dini's theorem \footnote{A version of the theorem for families of
functions easily follow from the usual Dini's theorem for sequences of functions.
}
 we see that the convergence is uniform.
So \eqref{45} follows and the proposition is proved.
\end{proof}

Now we are ready to prove the main lemma.

\begin{proof}[Proof of Lemma \ref{mainlemma}]

{\it Step 1: construction of the mapping $\Phi$.} 
By the regularity assumption (A1), for any $\zeta \in E$
\begin{equation}\label{taylor}
S(u', \eta +\zeta)=S(u,\eta) + D_u S(u,\eta)(u'-u) + D_\eta S(u,\eta) \zeta +O(\|u'-u\|_H ^2 + \|\zeta\|_E^2).
\end{equation}
To obtain  \eqref{3.9}, consider the relation above with $\zeta=\Phi(\eta)= \Phi(u,u';\eta)$.
If we can construct a measurable mapping $\Phi: D_\delta\times \cK  \to  E_{M_\vep}$  such that
\begin{eqnarray}
&&\|D_\eta S(u,\eta) \Phi(\eta) + (D_u S(u,\eta))(u'-u) \|_H  \leq \tfrac14 \|u'-u\|_H , \\
&&\|\Phi(\eta)\|_E  \leq C \|u'-u \|_H,
\end{eqnarray}
for all  $(u,u',\eta)$ with a uniform constant $C$,  then \eqref{3.9} should follow if $\delta\ll1$.

For a fixed $(u,u', \eta) \in   D_\delta\times \K$ consider the following equation on $\zeta:$
$$
D_\eta S(u,\eta) \zeta = -(D_u S)(u,\eta)(u'-u).
$$
By assumption $(H3)$ and Proposition \ref{rightinverse}, for any $\vep>0$ there exists  an $M=M_\vep\in \N$ and
a linear operator
$R_\vep(u,\eta): H \rightarrow  E_{M_\vep}$ whose  norm is bounded uniformly in $u, \eta\in X\times \K$,
such that for any $f\in V,$
$$
\|(D_\eta S) R_\vep f-f \|_H \leq \vep \|f\|_V.
$$
Take $f=-(D_u S)(u'-u)$. By $(A1)$ $f$ belongs to the space $V$ and $\| f\|_V \le C \|u'-u\|_H$, uniformly in
$(u, \eta) \in X\times \K$. Define
$$
\Phi:   D_\delta\times \K \to   E_{M_\vep},\quad
\Phi(u,u';\eta)= R_\vep(u,\eta) f= -R_\vep(u,\eta) (D_u S(u,\eta)) (u'-u).
$$
For 	any $\vep>0$ this is a continuous mapping, and  due to  assumption $(A1)$
\begin{eqnarray}
&& \| \Phi(u,u'; \eta)\|_E \leq C_\vep \|u'-u\|_H,\label{bd1}\\
&& \|\Phi(u,u';\eta)-\Phi(u,u';\eta') \|_E \leq C_\vep \|u'-u\|_H \|\eta -\eta'\|_E,  \label{bd2}
\end{eqnarray}
$\text{ for any }(\eta,\eta')\in  B_E(R_\eta)\times  B_E(R_\eta)$ and $(u,u')\in D_\delta$.

\smallskip

{\it Step 2: estimate \eqref{3.9}.} According to identity \eqref{taylor} and estimate
 \eqref{bd1},  for any $\eta \in \K$ and $\ u,u'\in D_\delta$ we have
\begin{eqnarray*}
\|S(u',\eta+\Phi(u,u';\eta))- S(u,\eta)\|_H & \leq& C \left(\|u'-u\|_H^2 +\|\Phi(u,u';\eta)\|_E^2\right) + \vep \|(D_u S)(u'-u)\|_V\\
&\leq & \|u'-u\|_H \left(C \vep + (1+C_\vep^2)C\|u'-u\|_H \right)\\
&\leq & \tfrac12 \|u'-u\|_H,
\end{eqnarray*}
where to get the last inequality we choose $\vep$ so small  that $C\vep<\frac14$  next choose  $\delta\le \delta(\vep) $ so small
 that $(1+C_\vep^2)C \delta <\frac14$. This proves  \eqref{3.9}.
 \smallskip

 \noindent
  {\it Step 3: estimate \eqref{3.8}.}
Recall that $ E_M=$span$\,\{e_1, \dots, e_M\}$.
Denote by  $E^\perp_M$ the complementary space of $ E_M$ in $E$. 
For any $ A \in \mathcal{B}_E $ we will denote by $A_M$
and $A_M^\perp$ its
 projections  to $ E_M$ and   $ E_M^\perp$, respectively.
Consider the Hilbert brick $\tilde\CK$, defined in \eqref{brick}. Then $\tilde\CK_M$ is a parallelepiped, 
$\tilde\CK_M^\perp$ is a Hilbert brick in $ E_M^\perp$,
$$
\tilde\CK_M = \{ \sum_{i=1}^M x_i e_i \Big| \ |x_i|\leq b_i, i=1,...,M  \},
$$
and $\tilde\CK = \tilde\CK_M\times \tilde\CK_M^\perp$.

By  assumption (H1) the measure $\ell$ is supported by  $\tilde\CK$ and
 $\ell = \ell_{M} \otimes \ell^\perp_{M}$ where  $\ell_M$ is supported by
$\tilde\CK_M$ and  $\ell_M^\perp$  --  by $\tilde\CK_M^\perp$. Besides, $\rho_M =\rho_M(dv)$, where
$ E_M=\{ v\}$ and $\rho_M$ is a Lipschitz function, supported by $\tilde\CK_M$.

 For any fixed $A\in \CB_E, $ denote by $h_A$ the indicator function of $A$.
 According to  Fubini's theorem,
$$
\ell(A)= \int h_A(e) \ell(de)= \int_{\tilde{\CK}^\perp_M} \ell^\perp_{M} (dw) \int_{\tilde\CK_M} h_A(w+v) \ell_M(dv)\\
= \int_{\tilde{\CK}^\perp_M}  \ell^\perp_{M} (dw) \int_{\tilde\CK_M}h_A(w+v) \rho_M(v)dv.\\
$$

Similar,
$$
\Psi_*(\ell)(A)= \int h_A(e+\Phi(u,u';e)) \ell(de)
= \int_{\tilde{\CK}^\perp_M} \ell^\perp_{M} (dw) \int_{ \tilde\CK_M}h_A(w+v+\Phi(u,u';w+v)) \rho_M(v)dv.
$$
Since $\| \ell- \Psi_* \ell\|_{var} = \sup_{A\in \CB_E}|\ell(A) - \Psi_*(\ell)A|$, then \eqref{3.8} would follow if we prove that
\begin{equation}\label{gbd}
 |\int_{\tilde\CK_M }h_A(w+v) \rho_M(v)dv- \int_{\tilde\CK_M}h_A(w+v+\Phi(u,u';w+v)) \rho_M(v)dv  | \leq C\|u-u'\|\quad \forall w,u,u',
\end{equation}
where $C$ does not depend on $A$ as well as on $w,u,u'$. To do this let us consider the mapping
$$
\Phi_w: \tilde\CK_M \ni v \mapsto \Phi(u,u'; v+w) \in  E_M.
$$
By \eqref{bd1} and \eqref{bd2} the map's norm and the 
 Lipschitz constant are  bounded by
 $C_\vep\| u-u'\|_H \le C_\vep \delta$. We will assume that $\delta$ is so small that
\be\label{delta}
C_\vep \delta\le 1/2
\ee
(where $\vep$ was fixed at  Step 2 of the proof). By the Kirszbraun theorem (see \cite{Fed}), for each fixed $u, u'$ and $w$,
$\Phi_w$ extends to  a mapping $\tilde\Phi_w:  E_M\to  E_M$ with the same Lipschitz constant.  Now consider the mapping
$$
\Xi:  E_M\to  E_M,\quad  v \mapsto \xi =v+ \tilde\Phi_w(v).
$$
Due to \eqref{delta} its Lipschitz constant is $\,\le 1+C_\vep \delta\le 3/2$. The inverse mapping  $\Xi^{-1}:  E_M\to  E_M$ exists and
$$
\text{Lip}\, (\Xi^{-1} - \text{id}) \le C'_\vep   \| u-u'\|.
$$

Let us consider the second integral in \eqref{gbd} and write it as an integral over $ E_M$ with an integrand which vanishes outside $\tilde\K_M$:
$$
 \int_{\tilde\CK_M}h_A(w+v+\Phi(u,u';w+v)) \rho_M(v)dv =
 \int_{ E_M}h_A(w+v+\Phi(u,u';w+v)) \rho_M(v)dv .
$$
Passing there  from the variable $v$ to $\xi$ we write the integral as
$$
\int_{ E_M} h_A(w+\xi) \rho_M(v(\xi)) \Big| \det \frac{\partial \xi}{\partial v}\Big|^{-1}\,d\xi=
\int_{\Xi(\tilde\K_M)} h_A(w+\xi) \rho_M(v(\xi)) \Big| \det \frac{\partial \xi}{\partial v}\Big|^{-1}\,d\xi
$$

(concerning Lipschitz changes of variables in integrals over $\R^N$ see e.g. Theorem~3.2.5 in
\cite{Fed}).
Writing the first integral in \eqref{gbd} as $ \int_{\K_M}  h_A(w+\xi) \rho_M(\xi) \,d\xi\ $ and using that the mapping
$v\mapsto \xi$  and its inverse both are $C_\vep \| u-u'\|$-close to the identity in the Lipschitz norm, we get
\eqref{gbd}. The main lemma is proved.
\end{proof}

\begin{rem}\label{r_2}
If, as in Remark \ref{r_1}, the mapping $S$ is defined only on a subdomain $Q$ of $H\times E$, containing $H\times \K$, and the 
assumption of Remark~\ref{r_new_norm} holds, then assumption \eqref{delta} should be strengthened to 
$
C_\eps \delta \le \min( 1/2, r(R_+)),
$
where $r$ is the positive functions from Remark~\ref{r_1}.  Then the extension $\tilde\Phi_w$ can be chosen to be bounded in norm
by $r(R_+)$, so the points $(u', \Psi)$ stay in the domain $Q$, where $S$ is well defined. 

\end{rem}


\section{Applications } \label{s_2}

In this section we apply the abstract theorems above to the two-dimensional Navier-Stokes system 
 on the torus $\mathbb{T}^2  = \R^2/(2\pi  )\Z\oplus (2\pi  )\Z$ and to well posed 
 quasilinear parabolic systems on $\T^d$,  perturbed  by 
  random forces,  and
  prove that  these systems are exponentially  mixing. To do that  we will pass from the corresponding PDE
  \eqref{1}  to a discrete time system \eqref{S_oper}, will show  that the latter 
satisfies  the assumptions (A1), (H1), (H2), (H3) and then will apply Theorem~\ref{t_final}.  We will do that in the situation when
the space $H$ is a Hilbert space of functions of $x$ and $E$ is the space $L_2([0,1]; H)$. Accordingly, the Hilbert base $\{e_j\}$ of $E$
will be of the form $\{h_r\otimes \phi_i\}$, where $\{\phi_i\}$ is a base of $H$ and $\{h_r\}$ --  a base of $L_2(0,1)$. To apply the abstract 
theorems to the 2d Navier-Stokes system the base $\{\phi_i\}$ may be arbitrary, while to apply them to the 
quasilinear parabolic systems its elements 
should be bounded functions. Below we restrict ourselves to the case when $\{\phi_i\}$ is the Haar base.

\subsection{Random Haar series (``colored noises")}\label{s_noise}

We will apply Theorem~\ref{t_final} to equations \eqref{1},  perturbed by random forces $\eta_t$ of the form 
$$
\eta_t= \sum_{k=1}^\infty 1_{[k-1,k)} (t) \eta_k(t-k+1 ),
$$
where $\{\eta_k(\tau), 0\le \tau<1 \}_k$ is a sequence of i.i.d. random variables in $L^2([0,1], H)$. 
 It suffices  to define the process $\eta_1 =\eta\mid_{[0,1)}$. To do this, let us denote 
  by $\{  h_{jl}: j\in \N, 0\leq l\leq 2^j-1\}$ the orthonormal Haar basis on $[0,1]$ :
\be\label{haar} 
h_{00}(t)=\left\{
\begin{array}{ll}
1 & 0\leq t <1 \\
0 & t<0 \text{ or } t\geq 1, \end{array}
\right.
\ \ \ h_{jl}(t)=\left\{
\begin{array}{ll}
0 & t<l2^{-j} \text{ or } t\geq (l+1) 2^{-j} ,\\
2^{\frac j2}& l2^{-j} \leq t < (l+\frac12) 2^{-j},\\
-2^{\frac j2} & (l+\frac12)2^{-j} \leq t < (l+1 ) 2^{-j}.
\end{array}
\right.
\ee
Then  $\{  h_{jl}\phi_i \}_{ijl}$ is an  orthonormal basis of  $E=L_2([0,1], H)$  (we recall that  $\{\phi_i \}_i$ is a Hilbert basis  of $H$). 

 Now we define the process $\eta_1 =\eta\mid_{[0,1)}$  in $H$ as 
\be\label{NSnoise1}
\eta_1(t)= \sum_{i=1}^\infty b_i \eta^i_t \phi_i, \quad b_i\ne0 \; \forall\,i. 
\ee
Here $\sum_i b_i^2 < \infty$ and for $i\ge 1$, $ \eta^i_t $ is the random process 
\be\label{NSnoise2}
\eta^i_t=  \sum_{j=0} c_j^i \sum_{l=0}^{2^j-1} \xi^i_{jl} h_{jl}, \quad  c_j^i\ne0\;\; \forall\, i,j, 
\ee
where
$$
\sum_{ijl}   (c^i_j b_i)^2 = \sum_i b_i^2 \sum_j 2^j (c^i_j)^2 =: R_\eta^2 <\infty,
$$
%

 and $\{  \xi^i_{jl}\}_{i,j,l}$ are independent real 
random variables with $|\xi^i_{jl} | \leq 1$, whose density functions $\rho_{ijl}$ are Lipschitz--continuous 
 and $\rho_{ijl}(0) \neq 0$. Denote the basis $\{  h_{jl}\phi_i \}_{i,j,l}$ as $\{ e_k , k\geq 1 \}$, where 
 $$
 j(k)\to\infty,\,\; i(k) \to\infty\quad \text{as} \; k\to\infty. 
 $$
 
Then assumption $(H1)$ obviously holds.

\subsection{The 2D Navier-Stokes system}\label{s_2dNS}

Consider the 2D Navier-Stokes system
\be\label{NS}
\partial_t u  - \nu \triangle u + \langle u, \nabla \rangle u  + \nabla p = \eta(t,x) , \ \ \ \text{div}\ u=0,\quad \int u\,dx=0, 
\ee
where $x\in \T^2= \R^2/ 2\pi \Z^2, u=(u_1,u_2)$, $p$ is an unknown scalar function (the pressure), $\nu>0$ is the 
viscosity  and $\eta$ is a random force as in the previous section. 
Denote by $\MH$ the $L_2$--space 
$$
  \MH= \Big\{ u \in L_2(\T^2, \R^2) :\ \text{div}\ u=0, \ \int_{\T^2} u(x)dx=0 \Big\}\ 
$$
with the norm  $\|\cdot \|$. For $m\in\N$ we set $ \MH^m:= \MH \cap H^m(\T^2, \R^2)$, 
where $H^m$ is the Sobolev space on the torus, equipped with the Sobolev norm $ \| \cdot \|_m$, and for  $m\in -\N$ denote
by  $ \MH^m$ the closure of  $ \MH$ in the Sobolev norm  $ \| \cdot \|_m$.  Let $\{\phi_j, j\geq 1\}$ be the usual trigonometric Hilbert basis of  $\MH,$ (e.g. see Chapter 2 in \cite{KS}). This  also is an orthogonal basis of  every  space $\MH^m.$ 
\\

Suppose  that $\eta(t, \cdot) \in \MH$ for all $t$, a.s. Applying the  Leray projection $\Pi: L_2(\T^2, \R^2) \rightarrow \MH $ to equation \eqref{NS} we obtain the following nonlocal PDE:
\be\label{NS2}
\partial_t u + \nu L u + B(u) = \eta(t,x), \ \ \ u(0)=u_0,
\ee
where $L=-\Pi\Delta $ and $B(u)= \Pi(\langle u, \nabla \rangle u )$. 

For $j\in \Z$ denote
$$
 E_j = L_2([0,1], \MH^j).
$$
By the classical results (see e.g. Chapter 2 in \cite{KS}), if $m\geq 1$, then for any  $u_0 \in \MH^{m-1}$ and $ \eta \in  E_{m-2}$ 
eq.~\eqref{NS} with $0\le t\le1$  has a unique solution 
$$
u \in \U_m:=\left\{ u_. \in E_m: \partial_t {u}_. \in E_{m-2} \right\}  \subseteq C([0,1];\MH^{m-1}).
$$
We equip the space $\U_m$ with the Hilbert 
norm $\|u\|_{\U_m}^2=\int_0^1\|u_t\|_{m}^2  dt  + \int_0^1\| \partial_t u_t\|^2_{m-2} dt $. 

Consider the dynamical system 
\be\label{sys1.1}
  S(u_{k-1},\eta_k)= u(1), 
\ee
where $u(t)$ is a solution of equation \eqref{NS2} with $u(0)= u_{k-1}$ and $\eta= \eta_k$. We are going to apply to
system  \eqref{sys1.1} Theorem \ref{t_final} with $H=\MH^1$ and $E= E_1$. Accordingly we rescale the basis $\{\phi_i\}$ to
be an orthonormal basis of the space $\MH^1$.
The validity of  assumption (H1) 
was already checked in  Section~\ref{s_noise}. In order to verify the the remaining assumptions we need two auxiliary results:

\begin{lem}\label{mainlem2} 
For any $u \in \U_2$ and $j=0$ or 1, 
consider the mapping 
\[
A_2(u):  \U_{j+2} \longmapsto   E_j \times \MH^{j+1} ; \ \  v   \longmapsto  \left( \partial_t v - \nu\Delta v + \Pi \left( \langle u, \nabla \rangle v + \langle v, \nabla \rangle u \right), v(0)  \right) .
\]
Then $A_2(u)$ is a linear isomorphism. Furthermore, the norm of its inverse  depends only on $\|u\|_{\U_2}.$
\end{lem}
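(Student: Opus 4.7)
The plan is to show that $A_2(u)$ is a continuous linear bijection whose inverse is bounded with constant depending only on $\|u\|_{\U_2}$. I would proceed in three steps: (i) check that $A_2(u)$ maps $\U_{j+2}$ continuously into $E_j\times\MH^{j+1}$, (ii) derive an a priori energy estimate giving injectivity and the bound on the inverse, and (iii) obtain surjectivity via a Galerkin approximation. For step (i) the trace $v\mapsto v(0)$ is controlled by the Lions--Magenes interpolation $\U_{j+2}\hookrightarrow C([0,1];\MH^{j+1})$, and the only non-routine contribution to $E_j$ is the bilinear piece $B(u,v):=\Pi(\langle u,\nabla\rangle v+\langle v,\nabla\rangle u)$. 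For $j=0$ I would use $\|u\cdot\nabla v\|_{L^2}\le\|u\|_{L^4}\|\nabla v\|_{L^4}$ together with the two-dimensional Sobolev embedding $\MH^1\hookrightarrow L^4$ and $\U_2\hookrightarrow C\MH^1$; for $j=1$ I would use the product estimate $\|fg\|_{\MH^1}\le C(\|f\|_{L^\infty}\|g\|_{\MH^1}+\|f\|_{\MH^1}\|g\|_{L^\infty})$ together with $\MH^2\hookrightarrow L^\infty$ in 2D and $\U_3\hookrightarrow C\MH^2$.

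For the a priori estimate in step (ii), assuming $A_2(u)v=(f,v_0)$, I would test the equation successively against $(-\Delta)^k v$ for $k=0,\dots,j+1$. The antisymmetric transport term $\int\langle u,\nabla\rangle v\cdot v\,dx$ vanishes because $\mathrm{div}\,u=0$, while the cross-term $\int\langle v,\nabla\rangle u\cdot v\,dx$ and its higher-order analogues are bounded by H\"older and 2D Sobolev and absorbed into a fraction of the dissipation $\nu\|v\|_{\MH^{k+1}}^2$. Gronwall's lemma with exponent dominated by $\int_0^1\|u(t)\|_{\MH^2}^2\,dt$ (finite since $u\in\U_2$) then yields
\[
\|v\|_{L^\infty\MH^{j+1}}+\sqrt{\nu}\,\|v\|_{L^2\MH^{j+2}}\le C(\|u\|_{\U_2})\bigl(\|f\|_{E_j}+\|v_0\|_{j+1}\bigr),
\]
and the bound on $\partial_t v$ in $E_j$ follows by reading off the equation. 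This simultaneously proves injectivity and controls $\|A_2(u)^{-1}\|$.

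For surjectivity in step (iii), I would apply Galerkin's scheme in the trigonometric basis $\{\phi_i\}$: projecting onto $\mathrm{span}\{\phi_1,\dots,\phi_N\}$ yields a linear ODE with coefficients integrable in time (using $u\in L^2\MH^2$ and the embedding $\MH^2\hookrightarrow L^\infty$ in 2D), hence a unique solution $v^N$ on $[0,1]$ with data $P_N v_0$ and forcing $P_N f$. The a priori bound applies uniformly in $N$, so a subsequence converges weakly in $\U_{j+2}$ to a limit solving $A_2(u)v=(f,v_0)$, and the estimate passes to the limit; uniqueness from step (ii) identifies the whole sequence. The main obstacle I anticipate is the top-order estimate at $j=1$: testing with $\Delta^2 v$ and integrating by parts produces $\int\nabla B(u,v)\cdot\nabla\Delta v\,dx$, whose most dangerous pieces are products of the form $(\partial u)(\partial^2 v)(\partial^2 v)$ and $(\partial^2 u)(\partial v)(\partial^2 v)$. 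Controlling them requires a careful use of 2D Sobolev inequalities and a splitting that lets a small fraction of $\nu\|v\|_{\MH^3}^2$ absorb the highest derivatives on $v$, leaving a remainder integrable through Gronwall with exponent controlled by $\|u\|_{\U_2}$.
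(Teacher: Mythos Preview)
Your proposal is correct and follows essentially the same route as the paper, which simply states that ``the proof easily follows by Galerkin's method.'' You have spelled out the standard ingredients of that argument---continuity via 2D product/Sobolev estimates, energy a priori bounds closed by Gronwall with exponent controlled by $\|u\|_{\U_2}$, and existence by Galerkin approximation---so your sketch is exactly the detailed version of what the paper leaves implicit.
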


The proof easily follows by Galerkin' method.

\begin{prop}\label{improreg}
For $u_0 \in \MH^1$ and  $\eta \in  E_1$, we have $S(u_0,\eta) \in \MH^2.$ Furthermore, the mapping  $S : \MH^1 \times  E_1 \mapsto \MH^2$ is analytic.

\end{prop}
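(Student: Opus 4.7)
The plan is to combine the analytic implicit function theorem with Lemma \ref{mainlem2} and the smoothing effect of the Stokes semigroup $e^{-\nu t L}$.

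First, for a baseline analyticity at the weaker regularity, define
\[
\Phi : \U_2 \times \MH^1 \times E_0 \to E_0 \times \MH^1,\qquad \Phi(u; u_0, \eta) := \bigl(\partial_t u - \nu L u + B(u) - \eta,\; u(0) - u_0\bigr).
\]
This is polynomial (quadratic in $u$, affine in the other arguments), hence analytic, and its $u$-derivative is exactly the operator $A_2(u)$ of Lemma \ref{mainlem2} with $j=0$, which is a linear isomorphism $\U_2 \to E_0 \times \MH^1$ with inverse norm controlled by $\|u\|_{\U_2}$. The analytic implicit function theorem produces a local analytic solution map $(u_0,\eta) \mapsto u \in \U_2$; the standard global $\U_2$ well-posedness of 2D Navier--Stokes rules out blow-up, making this map analytic on all of $\MH^1 \times E_0$. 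Composing with the continuous trace $\U_2 \hookrightarrow C([0,1], \MH^1) \to \MH^1$, $u \mapsto u(1)$, yields that $S : \MH^1 \times E_0 \to \MH^1$ is analytic, and hence so is $S : \MH^1 \times E_1 \to \MH^1$.

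Second, I would upgrade the target to $\MH^2$ through the Duhamel formula
\[
u(1) = e^{-\nu L} u_0 + \int_0^1 e^{-\nu(1-s)L}\eta(s)\,ds - \int_0^1 e^{-\nu(1-s)L} B(u(s))\,ds.
\]
The first summand is an analytic linear map $\MH^1 \to \MH^2$ by semigroup smoothing, since $\sup_k |k|^2 e^{-2\nu|k|^2} < \infty$. The second summand is an analytic linear map $E_1 \to \MH^2$: Cauchy--Schwarz in $s$ on each Fourier mode gives $\bigl|\int_0^1 e^{-\nu(1-s)|k|^2}\eta_k(s)\,ds\bigr|^2 \leq C|k|^{-2}\|\eta_k\|_{L^2([0,1])}^2$, and multiplying by $|k|^4$ and summing yields the bound $C\|\eta\|_{E_1}^2$. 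For the third summand, since the solution map is already analytic into $\U_2$ and $B$ is polynomial, the entire composition will be analytic provided one establishes the a priori estimate
\[
\Bigl\|\int_0^1 e^{-\nu(1-s)L} B(u(s))\,ds\Bigr\|_{\MH^2} \leq C(\|u\|_{\U_2}).
\]

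The main obstacle is this last bound. In two dimensions $\U_2 \hookrightarrow L^\infty_t\MH^1_x \cap L^2_t\MH^2_x$, and interpolation yields further embeddings such as $\U_2 \hookrightarrow L^4_t\MH^{3/2}_x \hookrightarrow L^4_t L^\infty_x$; these control $B(u)=\Pi(\langle u,\nabla\rangle u)$ in mixed Lebesgue--Sobolev spaces (for instance $B(u) \in L^2_t\MH^{1/2}_x$), and a Fourier-side estimate that trades spatial regularity against time-integrability of the kernel $e^{-\nu t|k|^2}$ then delivers the $\MH^2$ bound. If this direct calculation proves too delicate at the 2D borderline, a robust fallback is to run the implicit function theorem in the composite space $W := \{u \in \U_2 : u|_{[1/2,1]} \in \U_3([1/2,1])\}$, combining the $j=0$ case of Lemma \ref{mainlem2} on $[0,1]$ with the $j=1$ case on $[1/2,1]$; since $W \hookrightarrow C([1/2,1], \MH^2)$, analyticity into $W$ transfers directly to analyticity of $S$ into $\MH^2$.
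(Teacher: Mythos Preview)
Your baseline step---analyticity of $(u_0,\eta)\mapsto u$ into $\U_2$ via the implicit function theorem and the $j=0$ case of Lemma~\ref{mainlem2}---matches the paper exactly. The divergence is in the upgrade to $\MH^2$, and there your primary Duhamel argument has a real gap. The smoothing bound $\|e^{-\nu tL}\|_{\MH^\alpha\to\MH^2}\lesssim t^{-(2-\alpha)/2}$ forces $B(u)\in L^q_t\MH^\alpha$ with $q>2/\alpha$, whereas from $u\in L^\infty_t\MH^1\cap L^2_t\MH^2$ one only gets $\|B(u)\|_{\MH^\alpha}\lesssim\|u\|_{\MH^{1+\varepsilon}}\|u\|_{\MH^{1+\alpha}}$, hence $B(u)\in L^{2/(\alpha+\varepsilon)}_t\MH^\alpha$---just \emph{below} the threshold for every $\alpha\in(0,1)$. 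This is the critical 2D scaling; the estimate misses by a logarithm and no rebalancing of $\alpha$ closes it. Your fallback (implicit function theorem in $W=\{u\in\U_2:u|_{[1/2,1]}\in\U_3\}$) is not a complete argument: you have not specified the target space, and to check that $A_2(u)$ is onto $W$ you would need to know that the linearised solution with data $(v_0,\xi)\in\MH^1\times E_1$ satisfies $v(1/2)\in\MH^2$---which is the very smoothing statement at issue, merely transferred to the linear problem.

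The paper avoids the nonlinear borderline by passing to the linearisation from the start. Since $S(0,0)=0$, one has
\[
S(u_0,\eta)=\int_0^1\frac{d}{dt}S(tu_0,t\eta)\,dt=\int_0^1\bigl(D_{u_0}S(tu_0,t\eta)\,u_0+D_\eta S(tu_0,t\eta)\,\eta\bigr)\,dt,
\]
so it suffices that the \emph{derivatives} of $S$ map analytically into $\MH^2$. These solve the \emph{linear} equation~\eqref{linear3}. For $D_\eta S(u_0,\eta)\xi=v_1(0,\xi)$ the initial data vanishes, so Lemma~\ref{mainlem2} with $j=1$ gives $v\in\U_3$ directly. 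For $D_{u_0}S(u_0,\eta)h=\bar v_1$ the initial data $h$ is only in $\MH^1$, but the substitution $w=t\bar v$ turns this into a problem with zero initial data and forcing $\bar v\in\U_2\subset E_1$; then $j=1$ again yields $w\in\U_3$ and $\bar v_1=w_1\in\MH^2$. This $w=t\bar v$ trick is exactly the missing ingredient that would also make your fallback work, and using it on the linearised equation is much cleaner than wrestling with $B(u)$ directly.
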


\begin{proof}
Consider equation \eqref{NS2} with $u_0 \in \MH^1$ and  $\eta \in  E_0$. It has a unique solution $u \in \U_2.$ By Lemma \ref{mainlem2} with $j=0$ and the implicit function theorem, $u$  analytically depends on $(u_0,\eta)$ (see \cite{K82} for details), so the mapping 
\[\begin{array}{lc}
A_1: & \MH^1 \times  E_0  \longmapsto \U_2 ;  \ \ (u_0, \eta)  \longmapsto  u,
\end{array}
\]
is analytic, as well as the mapping 
$
 \MH^1 \times  E_1  \longmapsto \MH^1,  \ (u_0, \eta)  \mapsto  u_1. 
$
It remains to improve the regularity and show that the map 
\[\begin{array}{lc}
S: & \MH^1 \times  E_1 \longmapsto  \MH^2; \ \    (u_0, \eta)  \longmapsto  u_1,
\end{array}
\]
is analytic. Note that 
\be\label{**}
S(u_0,\eta)= \int_0^1 (d/dt)S(tu_0, t \eta)\,dt  = \int_0^1 (D_{u_0} S(tu_0, t \eta) u_0 + D_\eta S( tu_0,t\eta) \eta) dt,
\ee
so it suffices  to show that $D_{u_0} S(u_0,\eta)h $ and $ D_\eta S(u_0,\eta) \xi$ as mappings 
$\MH^1 \times  E_1\times \MH^1 \to \MH^2$ and $ \MH^1 \times  E_1\times  E_1\to \MH^2$, respectively, are analytic.\\
Denote by $u=u(u_0, \eta)
\in \U_2$ a solution of \eqref{NS2} with $(u_0, \eta) \in \MH^1 \times  E_1.$ Then $D_\eta S(u_0, \eta) \xi= v_1(0 , \xi), \xi \in  E_1$ where $v_t(v_0, \xi)$ stands for  a solution of the following  linear equation
\be\label{linear3}
\partial_t v + \nu L v + \Pi \left( \langle u, \nabla \rangle v + \langle v, \nabla \rangle u \right)= \xi, \ \ v(0)=v_0.
\ee

{\it The map $D_\eta S$:} 
By Lemma \ref{mainlem2} with $j=1$, we have $v_.(0,\xi) \in \U_3.$  It remains to show $v_.(0,\xi)$ analytically depends on $u,$
 which would imply that $D_\eta S( u_0, \eta) \xi= v_1$ is an analytic mapping. To this end consider the mapping
$$
F :  \U_2 \times \U_3  \longrightarrow   E_1 \times \MH^2, \quad 
  (u,v)   \longrightarrow  A_2(u)v =\left( \partial_t v - \nu \Delta v + \Pi \left( \langle u, \nabla \rangle v + \langle v, \nabla \rangle u \right), v(0)  \right) .
$$
To prove the required analyticity it suffices to apply Lemma \ref{mainlem2}  and the implicit function theorem to  the equation 
$F(u,v)=(\xi,0).$

{\it The map $D_{u_0} S$:}
Similarly  for any $h\in \MH^1$, $D_{u_0} S(u_0, \eta) h= \bar{v}_1,$ where $\bar{v}_t$ solves \eqref{linear3} with $\xi=0,  \ \bar{v}_0=h .$ By Lemma~\ref{mainlem2} with $j=0,$ $\bar{v}\in \U_2$; so $ \bar v(1) \in \MH^1$. To improve the smoothness and show that 
$ \bar v(1) \in \MH^2$ we consider the function $w=t\bar{v}.$ Calculating $\partial_t w$ and using that $\bar{v}$ satisfies \eqref{linear3} with $\xi=0,$ we find that in its turn, $w$ satisfies \eqref{linear3} with $v_0=0,  \ \xi=\bar{v}\in \U_2$. Then by Lemma  \ref{mainlem2}, $w\in \U_3.$ It follows that $\bar{v}_1=w_1 \in \MH^2$ analytically depends on $(u_0, \eta)$ for the same  reason as above. This proves the required analyticity of the mapping $S$. 
\end{proof}

The last proposition implies (A1). To verify (H2)  we note that since
\begin{eqnarray*}
\frac12 \frac{d}{dt} \| u \|_1^2 = \lan L u, \partial_t u \ran =
- \nu \|u\|_2^2 + \lan L u , \eta \ran
 \leq 
  -\frac{\nu}{2} \|u\|_2^2 + \frac{1}{2\nu} \|\eta \|^2,
\end{eqnarray*}
then
$
\| u_1\|_1^2 \le e^{-\nu} \| u_0\|_1^2 + \beta
$
 for  $\eta\in\K$, with 
 some constant $\beta$. So (H2) follows.

It remains to check  assumption (H3):

\begin{lem}\label{DI-NS}
For any 
$(u_0, \eta) \in \MH^1\times  E_1$ the  mapping $D_\eta S (u_0,\eta): E_1\mapsto \MH^1$ has dense image. 
\end{lem}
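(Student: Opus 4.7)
The plan is to exhibit, for each sufficiently smooth target $w\in\MH^m$ with $m$ large, an explicit preimage $\xi\in E_1$, and then invoke density of smooth divergence-free zero-mean fields in $\MH^1$. Recall from the proof of Proposition~\ref{improreg} that $D_\eta S(u_0,\eta)\xi=v_1(\xi)$, where $v_\cdot(\xi)\in\U_3$ is the unique solution (supplied by Lemma~\ref{mainlem2} with $j=1$) of the linearised equation
\[
\partial_t v-\nu\Delta v+\Pi\bigl(\langle u,\nabla\rangle v+\langle v,\nabla\rangle u\bigr)=\xi,\qquad v(0)=0,
\]
with $u=u(u_0,\eta)\in\U_2$ the Navier--Stokes trajectory starting from $u_0$ under the forcing $\eta$.

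Fix $w\in\MH^1$ and $\varepsilon>0$, and pick $w_\varepsilon\in C^\infty\cap\MH^3$ with $\|w-w_\varepsilon\|_1<\varepsilon$ (such $w_\varepsilon$ exist since smooth divergence-free zero-mean fields are dense in $\MH^1$, e.g.\ by Fourier truncation). Set $v(t):=t\,w_\varepsilon$, so that $v(0)=0$ and $v(1)=w_\varepsilon$, and define
\[
\xi(t):=w_\varepsilon-t\nu\Delta w_\varepsilon+t\,\Pi\bigl(\langle u(t),\nabla\rangle w_\varepsilon+\langle w_\varepsilon,\nabla\rangle u(t)\bigr).
\]
By construction the pair $(v,\xi)$ satisfies the linearised equation above with zero initial condition. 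Once we verify that $\xi\in E_1=L_2([0,1];\MH^1)$, the uniqueness part of Lemma~\ref{mainlem2} forces $v_\cdot(\xi)=v$, hence $D_\eta S(u_0,\eta)\xi=v(1)=w_\varepsilon$. Since $\|w_\varepsilon-w\|_1<\varepsilon$, this shows the image is dense in $\MH^1$.

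Thus the only remaining step is the regularity check $\xi\in E_1$. The time-constant pieces $w_\varepsilon$ and $\nu\Delta w_\varepsilon$ lie in $\MH^1$ because $w_\varepsilon\in\MH^3$, and so trivially belong to $E_1$. For the bilinear pieces one uses that $\U_2\hookrightarrow C([0,1];\MH^1)\cap L_2([0,1];\MH^2)$ (Aubin--Lions) together with smoothness of $w_\varepsilon$: standard Sobolev product estimates in two dimensions give
\[
\|\langle u(t),\nabla\rangle w_\varepsilon\|_1\le C\,\|w_\varepsilon\|_{C^2}\,\|u(t)\|_1,\qquad \|\langle w_\varepsilon,\nabla\rangle u(t)\|_1\le C\,\|w_\varepsilon\|_{C^1}\,\|u(t)\|_2,
\]
so the first term is in $L_\infty([0,1];\MH^1)$ and the second in $L_2([0,1];\MH^1)$; boundedness of $\Pi$ on $\MH^1$ then yields $\xi\in E_1$. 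The main (and essentially only) obstacle is bookkeeping in this regularity verification; everything else is an algebraic consequence of the explicit construction and of Lemma~\ref{mainlem2}.
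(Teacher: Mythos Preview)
Your proof is correct and takes a genuinely different route from the paper. The paper argues by duality: it writes $D_\eta S(u_0,\eta)=\ML$, passes to the backward adjoint problem for $\tS_1^t$, and shows that $\ML^*w=0$ forces $w=0$ by tracking the terminal value $(L\ML^*w)(1)=Lw$ through an approximation $w^n\to w$ in $\MH^1$. Your argument is instead constructive: given a smooth target $w_\varepsilon$, you prescribe the solution $v(t)=tw_\varepsilon$ of the linearised equation and read off $\xi$ algebraically, then invoke the isomorphism in Lemma~\ref{mainlem2} with $j=1$ (together with $v\in\U_3$, which holds since $w_\varepsilon\in\MH^3$) to conclude $v_\cdot(0,\xi)=v$ and hence $D_\eta S(u_0,\eta)\xi=w_\varepsilon$.

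Your approach is more elementary---no adjoint equation, no limiting argument in the target---and actually proves the slightly stronger fact that every $w_\varepsilon\in\MH^3$ lies in the range. The paper's duality template, on the other hand, is reproduced essentially verbatim for the quasilinear systems in Section~2.3, so its virtue is uniformity across the two applications; but your construction would transplant there just as easily (replace $-\nu\Delta$ and the bilinear form by the linearised operator of \eqref{linearqp} and take $w_\varepsilon\in H^{m+2}$).
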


\begin{proof}
Denote by 
$u =u(u_0, \eta) 
\in \U_2 $ the solution of equation \eqref{NS2} and by $v_t(v_0, \xi)$ the solution of equation \eqref{linear3}. Define $\mathcal{L}: E_1\mapsto \MH^1$ by $\ML(\xi)= v_1(0,\xi).$ Then $D_\eta S(u,\eta) \xi= \ML(\xi),$ so we should  show that the mapping 
$\ML: E_1\mapsto \MH^1$ has dense image. According to Fredholm's alternative, we only need to verify  that the adjoint operator 
$\ML^*: \MH^1 \mapsto  E_1$ has trivial kernel.

In order to do that let us consider the dual problem
\be\label{linear2}
\left\{
\begin{array}{ll}
- \partial_t w + \nu Lw - \Pi (\langle u, \nabla \rangle w) +  \Pi (du)^* w=0,&\\
w_1=w,&
\end{array}
\right.
\ee
where 
$\ 
( (du)^* w)^j = \sum_{l=1}^2 \frac{\partial u^l}{\partial x^j} w^l.
$
It is dual to the problem \eqref{linear3} with $\xi=0$ in the sense that if $v$ solves  \eqref{linear3} with $\xi=0$ and
$w$ solves  \eqref{linear2}, then 
$$
\langle v_t, w_t\rangle = \text{const}. 
$$
For a fixed  $\tau \in[0,1],$ let $S_\tau^t: v \mapsto {v}_t$, $\tau \le t\le1$, be 
 the resolving operator for equation \eqref{linear3} with 
 $\xi=0$ and the initial value ${v}_\tau=v$. Similar, let   $\tS_1^t: w \mapsto w_t$ be the resolving operator for eq.~\eqref{linear2} with terminal value $w_1=w.$ Then $\tS_1^t$ is   the dual operator for $S_t^1$  with respect to the   $L_2$--scalar product. 
 Accordingly, for any  $\eta \in  E_1$ and $w\in \MH^2,$ by Duhamel's principle,   
\begin{eqnarray} \label{relation1} 
\int_0^1 \langle \eta_t, (\ML^* w)_t  \rangle_1 dt = \langle \eta, \ML^* w \rangle_{ E_1} = \langle \ML(\eta), w  \rangle _{1} 
= \langle \int_0^1 S_t^1 \eta_t  dt, w \rangle_1 =\int_0^1 \langle \eta, \tS_1^t  L w \rangle dt.
\end{eqnarray}
If  $w\in \MH^1$, then we approximate $w$ in $\MH^1$ by 
smooth functions $w^n, n\ge1$, and substitute 
\be\label{xi_n}
\tS_1^t  Lw^n = L^{1/2} \xi_t^n. 
\ee
Then $\xi_1^n = L^{-1/2} L w^n = L^{1/2}  w^n$ and 
$$
- \partial_t \xi^n + \nu L\xi^n +L^{-1/2} \Pi \big(-\langle u, \nabla \rangle L^{1/2} \xi^n  +  (du)^* L^{1/2}  \xi^n\big)=0,\quad
\xi_1^n = L^{1/2} w^n.
$$
Taking a scalar product of this equation with $\xi^n$ in $\MH$ and using that  the vectors 
$\xi_1^n =L^{1/2}  w^n$ are bounded uniformly in $\MH$ we find that 
$
| \xi^n|_{\U_1} \le C(| u|_{\U_2} , \| w\|_1  ),
$
uniformly in $n$. 
So $\xi^{n_j} \rightharpoonup \xi \in\U_1$ weakly in $\U_1$, for a suitable sequence $n_j\to\infty$. 
 From this convergence, \eqref{xi_n} and \eqref{relation1} 
we get that 
$$
(L\ML^* w)(t) =L^{1/2} \xi_t \in C([0,1], \MH^{-1}), \quad  (L\ML^* w)(1) =L^{1/2} \xi_1 =  Lw. 
$$
Now let $\bar w_1 \in \MH^1$ be such that $\ML^* \bar w_1 =0$. Then $L\bar w_1=(L\ML^* \bar w_1)(1)=0,$ which implies $\bar w_1=0$ and completes the proof.
\end{proof}

It follows that Theorem \ref{t_final} applies and implies that 

\begin{thm}\label{t_2dNSE}
Suppose that the 
noise $\eta$ has the form \eqref{NSnoise1}, \eqref{NSnoise2}, where $\{\phi_i\}$ is the trigonometric  Hilbert basis of $\MH^1$. 
Then for any $\nu>0,$ the Markov chain 
 $(u_k , P_k)$, defined  by the 2D Navier-Stokes equation \eqref{NS} in $\MH^1$, 
 has a unique stationary measure $\mu_\nu\in \MP(\MH^1)$. Furthermore, for any $R>0$ and  any measure $\mu$ supported by the  ball $B_{\MH^1}(R)$, there exist $C= C(R)>0$ and 
 $\kappa = \kappa(R) \in(0,1)$ such that
$$
\| \MP_k^*\mu - \mu_\nu \|_{L(\MH^1)}^* \le C\kappa^k, \quad \forall\, k\ge1.
$$
\end{thm}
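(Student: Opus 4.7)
The plan is to cast the 2D Navier--Stokes system as a discrete-time random dynamical system fitting the abstract framework of Section~1 and then to invoke Theorem~\ref{t_final}. We choose $H = \MH^1$, $E = E_1 = L_2([0,1], \MH^1)$, and $V = \MH^2$, which is compactly and densely embedded in $\MH^1$. The one-step map $S$ is the time-1 solution operator of \eqref{NS2} defined in \eqref{sys1.1}. With this setup, the theorem reduces to checking assumptions (A1), (H1), (H2), (H3).

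Assumption (H1) is immediate: the noise was constructed in Section~\ref{s_noise} exactly to have the required decomposition in the orthonormal basis $\{h_{jl}\phi_i\}$ of $E_1$, with independent coefficients whose densities are Lipschitz and nonzero at the origin, and the summability $R_\eta^2 = \sum_i b_i^2 \sum_j 2^j (c_j^i)^2 < \infty$ gives the finite-variance condition. Assumption (A1) follows from Proposition~\ref{improreg}: the map $S:\MH^1\times E_1 \to \MH^2$ is analytic, hence $C^2$, and a standard inspection of the proof shows that the bounds on its derivatives are controlled by the $\U_2$-norm of the underlying trajectory, which in turn is controlled by the $\MH^1$-norm of $u_0$ and the $E_1$-norm of $\eta$; in particular they are bounded on bounded sets. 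Assumption (H2) follows from the energy identity already displayed in the text: testing \eqref{NS2} against $Lu$ gives $\tfrac12 \tfrac{d}{dt}\|u\|_1^2 \le -\tfrac{\nu}{2}\|u\|_2^2 + \tfrac{1}{2\nu}\|\eta\|^2$, and Gronwall's inequality together with $\|\eta\|_{E_0} \le R_\eta$ on $\K$ yields the linear bound $\|S(u,\eta)\|_1^2 \le e^{-\nu}\|u\|_1^2 + \beta$ with $\gamma = e^{-\nu/2} < 1$ and a constant $\beta$ (and $\beta = 0$ when $\eta = 0$). Finally, assumption (H3), that $D_\eta S(u,\eta): E_1 \to \MH^1$ has dense image for every $(u,\eta) \in \MH^1 \times E_1$, is exactly the content of Lemma~\ref{DI-NS}.

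Given these four verifications, Theorem~\ref{t_final} applies and produces a unique stationary measure $\mu_\nu \in \MP(\MH^1)$ for the chain, supported in $B_{\MH^1}(R_*^0)$, together with the exponential bound $\|\MP_k^* \mu - \mu_\nu\|_{L(\MH^1)}^* \le C(R)\,\kappa(R)^k$ whenever $\mu$ is supported in $B_{\MH^1}(R)$. The only subtle point is that Theorem~\ref{t_final} requires the estimates in (A1) and (H2) on the ambient ball $B_{\MH^1}(R_*)$, but by Remark~\ref{r_new_norm} we may first restrict to the absorbing set, i.e.\ the closure of trajectories starting from $B_{\MH^1}(R)$ and driven by controls in $\K$, which is bounded in $\MH^1$ uniformly in time by (H2). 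On this bounded invariant set all the bounds of (A1) from Proposition~\ref{improreg} are automatic and Lemma~\ref{DI-NS} provides (H3) at every point; no genuine extra work is required.

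I do not expect any serious obstacle here, since the hard analytic work is already packaged in Proposition~\ref{improreg} and Lemma~\ref{DI-NS}. The only item that demands a moment's care is checking that the constants in assumption (A1) really depend only on $\|u\|_{\MH^1}$ and $\|\eta\|_{E_1}$: this is ensured by combining the $\U_2$-energy estimate for solutions of \eqref{NS2} with the implicit function theorem argument used in Proposition~\ref{improreg}, and then applying the reduction of Remark~\ref{r_new_norm} to work on a fixed bounded invariant set.
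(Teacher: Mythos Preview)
Your proposal is correct and follows essentially the same route as the paper: set $H=\MH^1$, $E=E_1$, $V=\MH^2$, verify (H1) from the construction of the noise in Section~\ref{s_noise}, (A1) from Proposition~\ref{improreg}, (H2) from the $\MH^1$--energy inequality, and (H3) from Lemma~\ref{DI-NS}, then apply Theorem~\ref{t_final}. The paper proceeds exactly in this order (see the text between Proposition~\ref{improreg} and the statement of Theorem~\ref{t_2dNSE}); your additional appeal to Remark~\ref{r_new_norm} is not strictly needed here since the energy estimate already gives global $\MH^1$--bounds, but it does no harm.
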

\medskip

Let $u^\nu_t$ be a solution of \eqref{NS} such that $\CD(u^\nu_0)= \mu_\nu$. 
Then  $\CD(u^\nu_k)= \mu_\nu$ for $k\in\N$, while
for $t=k+\tau$, $0\le\tau<1$, we have
$$
\CD(u^\nu_t) = \Sigma ^\tau \CD(u^\nu_k) = \Sigma^\tau\mu, 
$$
where $  \Sigma^\tau$ is a Lipschitz operator in the space $( \MP(X), \| \cdot \|_L^*)$. So we get 

\begin{coro} \label{c_2dNSE}
Under the assumptions of Theorem \ref{t_2dNSE} let $u_t(x)$ be a solutions of \eqref{NS} such that $\CD (u_0)= \mu$. 
Then
$$
\| \CD(u_t)  -   \CD(u^\nu_t) \|_{L(\MH^1)}^* \le C\kappa^t, \quad \forall\, t\ge0. 
$$
\end{coro}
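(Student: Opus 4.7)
The plan is to reduce the continuous-time estimate to the discrete-time bound of Theorem~\ref{t_2dNSE} via the interpolating flow $\Sigma^\tau$. First I would decompose $t = k + \tau$ with $k \in \N$ (or $k=0$) and $\tau \in [0,1)$. Let $\Sigma^\tau : \MP(\MH^1) \to \MP(\MH^1)$ denote the push-forward induced by the time-$\tau$ flow of \eqref{NS2} driven by the noise $\eta|_{[k,k+\tau]}$, which is independent of the state at time $k$. Then by construction $\CD(u_t) = \Sigma^\tau \CD(u_k)$, and by stationarity of $\mu_\nu$, $\CD(u^\nu_t) = \Sigma^\tau \mu_\nu$, so
$$
\|\CD(u_t) - \CD(u^\nu_t)\|_L^* = \|\Sigma^\tau \CD(u_k) - \Sigma^\tau \mu_\nu\|_L^*.
$$

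The key step is to verify that $\Sigma^\tau$ is Lipschitz on $(\MP(B_{\MH^1}(R)), \|\cdot\|_L^*)$ with a constant $L_* = L_*(R)$ uniform over $\tau \in [0,1]$. Given a test function $f \in C_b(\MH^1)$ with $|f|_{L(\MH^1)} \le 1$, set $g(v) := \E f(\tilde u_\tau(v))$, where $\tilde u_\cdot(v)$ solves \eqref{NS2} with $\tilde u_0 = v$. Clearly $|g| \le 1$. For two initial conditions $v, v' \in B_{\MH^1}(R)$ driven by the same noise realisation $\eta \in \K$, the difference $w = \tilde u(v) - \tilde u(v')$ satisfies the linearised equation \eqref{linear3} with $\xi = 0$, and a standard $\MH^1$-energy estimate using Lemma~\ref{mainlem2} yields $\|w_\tau\|_1 \le e^{C(R)} \|w_0\|_1$ uniformly for $\tau \in [0,1]$. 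Hence $\mathrm{Lip}(g) \le e^{C(R)}$, so $|g|_{L(\MH^1)} \le L_*(R) := e^{C(R)} \vee 1$, giving
$$
|\langle \Sigma^\tau \rho_1 - \Sigma^\tau \rho_2, f\rangle| = |\langle \rho_1 - \rho_2, g\rangle| \le L_*(R)\,\|\rho_1 - \rho_2\|_L^*.
$$

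To conclude, pick $R$ so large that $\mathrm{supp}\,\mu \subset B_{\MH^1}(R)$ and $R \ge R_*^0$; then the dissipativity (H2) verified for NS guarantees that both $\CD(u_k)$ and $\mu_\nu$ are supported in $B_{\MH^1}(R)$ for all $k \ge 0$. Applying Theorem~\ref{t_2dNSE}, $\|\CD(u_k) - \mu_\nu\|_L^* \le C\kappa^k$, so combining with the Lipschitz estimate above yields
$$
\|\CD(u_t) - \CD(u^\nu_t)\|_L^* \le L_*(R)\, C\, \kappa^k \le (L_*(R)\, C / \kappa)\, \kappa^t,
$$
which is the claim with $\tilde C := L_*(R)\, C / \kappa$.

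The main (and essentially only) obstacle is the uniform-in-$\tau$ Lipschitz estimate for $\Sigma^\tau$. It reduces to a $\MH^1$-energy inequality for the linearised 2D NS equation on a time interval of unit length, combined with the uniform $\MH^1$-bound on solutions supplied by (H2) and the uniform bound $\|\eta\|_{\MH^1} \le R_\eta$ on the noise. Once this is in hand, everything else is bookkeeping: telescoping through $\Sigma^\tau$ and invoking the discrete-time theorem.
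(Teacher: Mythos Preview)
Your approach is the same as the paper's: write $t=k+\tau$, push both laws forward by the time-$\tau$ map $\Sigma^\tau$, use that $\Sigma^\tau$ is Lipschitz on $(\MP(X),\|\cdot\|_L^*)$, and invoke the discrete-time bound from Theorem~\ref{t_2dNSE}. The paper states this in one sentence without proof; you supply the details.

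One small correction: the difference $w=\tilde u(v)-\tilde u(v')$ does \emph{not} solve the linearised equation \eqref{linear3} around $u=\tilde u(v)$. It satisfies instead
\[
\partial_t w+\nu Lw+\Pi\big(\langle \tilde u(v),\nabla\rangle w+\langle w,\nabla\rangle \tilde u(v')\big)=0,
\]
which mixes $\tilde u(v)$ and $\tilde u(v')$. The $\MH^1$-energy estimate you want still goes through for this equation (both trajectories are bounded in $\U_2$ uniformly in $\tau\in[0,1]$ by (A0)/(H2)), so the Lipschitz bound for $\Sigma^\tau$ is unaffected. Alternatively, and more in the spirit of the paper, you can simply quote that the time-$\tau$ map is $C^1$ with derivative bounded on bounded sets (this is the content of (A1), verified for the 2D~NS via Proposition~\ref{improreg} and Lemma~\ref{mainlem2}), which immediately gives the uniform Lipschitz property.
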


Note that the mapping $S: \MH^0 \times E_1 \rightarrow \MH^1$ is also well defined (see e.g. \cite{KS}), and for any initial distribution $\mu \in \MP(\MH^0),$ $\CD(u_1) \in \MP(\MH^1)$. By applying Theorem \ref{t_2dNSE} and Theorem \ref{t_final}, we have the following result.

\begin{coro} \label{c2_2dNSE}
For any $\mu \in \MP(\MH^0),$ supported by the  ball $B_{\MH^0}(R)$, there exist $C= C(R)>0$ and 
 $\kappa = \kappa(R) \in(0,1)$ such that
$$
\| \MP_k^*\mu - \mu_\nu \|_{L(\MH^1)}^* \le C\kappa^k, \quad \forall\, k\ge1.
$$
Furthermore, if $\nu \in \MP(\MH^0)$ is any measure, then
$
 \MP_k^* \nu \strela \mu_\nu$  as $k \rightarrow \infty.
$
\end{coro}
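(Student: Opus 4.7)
The plan is to reduce Corollary~\ref{c2_2dNSE} to Theorem~\ref{t_2dNSE} via a one-step regularization, and then to handle arbitrary $\nu \in \MP(\MH^0)$ by the truncation argument used in the last paragraph of the proof of Theorem~\ref{t_final}.

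The key step is to show that if $\mu$ is supported by $B_{\MH^0}(R)$, then $\MP_1^*\mu$ is supported by some ball $B_{\MH^1}(R')$ with $R' = R'(R)$. By \eqref{5} the realizations of $\eta_1$ lie in $\CK \subset B_{E_1}(R_\eta)$ almost surely, so it suffices to verify that the mapping $S : \MH^0 \times E_1 \to \MH^1$ sends the bounded set $B_{\MH^0}(R) \times \CK$ into a bounded subset of $\MH^1$. For the 2D Navier--Stokes system this is the standard parabolic smoothing: the energy identity for \eqref{NS2} gives $u \in L^2([0,1],\MH^1)$ with norm controlled by $\|u_0\|$ and $\|\eta\|_{E_0}$, and testing the equation against $Lu$ while using $\eta \in E_1$ and the 2D Ladyzhenskaya inequality to absorb the nonlinearity upgrades this to a deterministic bound $\|u_1\|_1 \le R'(R, R_\eta)$.

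Granted this, I would apply Theorem~\ref{t_2dNSE} to $\MP_1^*\mu$ in place of $\mu$: for every $k \ge 1$,
$$
\|\MP_k^* \mu - \mu_\nu\|_{L(\MH^1)}^* = \|\MP_{k-1}^* (\MP_1^*\mu) - \mu_\nu\|_{L(\MH^1)}^* \le C(R') \kappa(R')^{k-1},
$$
which gives the first assertion with $C(R) := C(R')/\kappa(R')$ and $\kappa(R) := \kappa(R')$.

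For arbitrary $\nu \in \MP(\MH^0)$, I would imitate the end of the proof of Theorem~\ref{t_final}: by Ulam's theorem the truncations $\nu_R := \nu(\,\cdot \cap B_{\MH^0}(R))/\nu(B_{\MH^0}(R))$ converge to $\nu$ in total variation as $R \to \infty$, and since $\MP_k^*$ contracts the total variation distance one obtains
$$
\|\MP_k^* \nu - \mu_\nu\|_{L(\MH^1)}^* \le 2 \|\nu - \nu_R\|_{var} + \|\MP_k^* \nu_R - \mu_\nu\|_{L(\MH^1)}^*.
$$
Choosing first $R$ large, then $k$ large, we get $\|\MP_k^*\nu - \mu_\nu\|_{L(\MH^1)}^* \to 0$, which in particular yields the weak convergence $\MP_k^* \nu \strela \mu_\nu$ in $\MH^0$. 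I expect the main obstacle to be the quantitative regularization $B_{\MH^0}(R) \times \CK \to B_{\MH^1}(R')$ in the first step: it is not automatic from the mere continuity of $S$ recalled before the statement, and must be extracted from the specific parabolic structure of \eqref{NS2}.
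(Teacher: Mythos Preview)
Your proposal is correct and follows exactly the route the paper takes: the paper's entire argument is the one-line remark preceding the corollary that $S:\MH^0\times E_1\to\MH^1$ is well defined (with a reference to \cite{KS} for the quantitative bound you sketch), so that $\MP_1^*\mu\in\MP(\MH^1)$, after which Theorem~\ref{t_2dNSE} and the truncation step of Theorem~\ref{t_final} finish the job. Your outline of the parabolic smoothing (energy estimate in $\MH^0$ followed by testing against $Lu$) is in fact more explicit than what the paper records, and your identification of this as the only nontrivial point is accurate.
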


\begin{rem}\label{e_kicks}
 Now consider the 2D NSE \eqref{NS}, where $\eta(t,x)$ is a random kick force (see \cite{KS}, Section~2.3):
 $
 \eta= \sum_{k=1}^\infty \eta_k\,\delta(t-k).
 $
 Here $\{\eta_k\}$ is a sequence of i.i.d. random variables in $\MH^2$.  Solutions $u_t, t\ge0$, of the equation
 define the map $S:\MH^1 \times\MH^2 \to \MH^1$ by the relation
 $$
 S(u_0, \eta_1) = u_1 = S_1(u_0)+\eta_1,
 $$
 where $S_1$ is the time-one flow-map for the free equation \eqref{NS}${}_{\eta=0}$. According to Proposition \ref{improreg}, the map $S$ satisfies the regularity assumption 
 (A1) with $E:=\MH^2$. Now as in \cite{KS}, Section~3.2.4, set $\eta_k = \sum_{j=1}^\infty b_j \xi_{jk} \phi_j$, 
 where $\{b_j, j\ge1\}$, is an $l_2$--sequence  of nonzero real numbers, $\{\phi_j, j\ge1\}$, is the orthonormal   trigonometric basis of the space $\MH^2$,
  and $\{\xi_{jk}\}$ are i.i.d. real variables whose density function is Lipschitz--continuous 
 and does not vanish at the origin. Then (H1),  (H2)  and (H3) hold trivially. 
  Applying Theorem~\ref{t_final} we recover the well known result that the 2D~NSE with a non-degenerate random kick force is
 exponentially mixing, see in \cite{KS}. 

\end{rem}
Similarly Theorem~\ref{t_final} applies to the CGL equation \eqref{cgl} as below in Example \ref{e_cgl}, where $\eta$ is a nondegenerate kick-force.


\subsection{Quasilinear parabolic systems on $\T^d$}

In this part we consider quasilinear parabolic systems 
\be\label{qps}
\partial_t u= \Delta u + f(x, u, \nabla u )+ \eta, \ \ \  u(0)=u_0,
\ee  
where $f(x,0,0) =0$, 
$u=u_t(x) \in \R^n$, $x \in \T^d$ and 
 $f : \T^d \times \R^n \times \R^{n\times d} \mapsto \R^n$ is a $C^\infty$--smooth function. We restrict ourselves to the  case when 
 the solution $u$ is sought in the space 
 $$
 H^m= H^m(\T^d; \R^n), \quad m > \frac d 2 +1,
 $$
 and $\eta_t$ is a process as in Section \ref{s_noise} with $H=H^m$.

  We are going to apply Theorem~\ref{t_final} with $H=H^m$. To do that we will  make  two assumptions concerning  the well-posedness and regularity of  eq.~\eqref{qps}. To formulate them we define  the following spaces, where $k\in\Z$ and $T>0$:
  $$
  E_k^T =L_2([0,T]; H^k), \qquad 
  \U_k^T =\{u\in E_k^T: \partial_t u  \in E^T_{k-2}\} \subset C([0, T]), H^{k-1}),
  $$
  and abbreviate $E_k^1 = E_k, \; \U_k^1 = \U_k$.

\begin{itemize}
\item[(A0)](\textbf{Well-posedness}). For any $M, T>0$, $u_0\in H^m$ and any $\eta\in E^T_{m-1}$ such that 
$\| \eta_t\|_{m-1}  \le M$ for all $t$, the problem \eqref{qps} has a unique solution $u\in \U^T_{m+1}$. It satisfies 
\be\label{R_+}
\| u_t\|_m \le C(M, \| u_0\|_m)\quad \forall\, t. 
\ee


\item[(H$1'$)](\textbf{Random force}). The force $\eta$ has the form \eqref{NSnoise1},  \eqref{NSnoise2}, where $\{\phi_i\}$ is a Hilbert
base of $H^m$, $\{h_{ij}\}$ is the Haar base and 
$$
M^2= (\sup_i \sum_j 2^{\frac j 2} |c_j^i|)^2 (\sum_i b_i^2) <\infty. 
$$

\item[(H$2'$)](\textbf{Dissipativity}). There exists $0\leq r' \leq m$ and $\kappa>0,$ such that if $u$ solves \eqref{qps} with $\eta=0$ and $u_0\in H^m$,  then 
\be
\|u_t \|_{r'}  \leq  e^{-\kappa t}\|u_0\|_{r'}.
\ee
\end{itemize}
\medskip

Note that due to (H$1'$) 
\be\label{eta_m}
 \| \eta_t\|_m \le M \quad \forall\, t,\; \forall\, \omega. 
 \ee
 
\begin{rem} 
Here we assume that the random force $\eta$ is bounded uniformly in $\omega$ and $t$, while in Section~\ref{s_2dNS} we
 assumed that it is bounded in $t$ in the $L_2$--sense (i.e. in the norm of the space $E$). This is needed since the class of equations 
 \eqref{qps} for which it is possible to prove well-posedness of the initial value problem for  $u_0\in H$ and  $\eta\in E^T_{m}$ 
  is smaller than the class of equations for which we can prove the well-posedness for the problem with bounded 
 in time $\eta$ (e.g. we cannot prove that assumption (A0) holds for the CGL equations as in 
 Example~\ref{e_cgl} without the additional restriction   $\| \eta_t\|_{m-1}  \le M$). So working with bounded in time random 
 forces as in  \eqref{eta_m} we can apply our main theorem to a larger class of quasilinear equation  \eqref{qps}. 
\end{rem}

\begin{lem}\label{b} Under the assumption (A0)  there exists an open \nbh\ $Q$ of $H^m \times \K$ in $H^m\times E_{m-1}$ which
contains each point $(u_0, \eta) \in H^m\times \K$ with its vicinity  in $H^m\times E_{m-1}$ of radius which depends only on $\| u\|_m$, 
such that for every $(u_0, \eta) \in Q$ the
problem \eqref{qps} has a unique solution $u$. Moreover, the mapping
$$
Q\ni (u_0, \eta) \mapsto u\in \U_{m+1}
$$
is $C^\infty$--smooth, and for any $k\ge1$ 
 its $C^k$--norm is bounded on bounded sets. 

\end{lem}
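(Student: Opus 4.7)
The plan is to apply the quantitative implicit function theorem to the operator
$$
F: \U_{m+1} \times H^m \times E_{m-1} \to E_{m-1} \times H^m,\quad
(u, u_0, \eta) \mapsto \bigl(\partial_t u - \Delta u - f(x, u, \nabla u) - \eta,\; u(0) - u_0\bigr),
$$
whose zero set consists of all (solution, data) pairs for \eqref{qps}. For each $(u_0, \eta) \in H^m \times \K$, assumption (A0) together with the uniform bound \eqref{eta_m} produces a unique reference solution $\bar u \in \U_{m+1}$, and the strategy is to invoke the IFT around $\bar u$, then take the union of the resulting neighborhoods to define $Q$.

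The first step is to verify that $F$ is $C^\infty$. The affine terms are trivial, so only the Nemytskii-type map $u \mapsto f(x, u, \nabla u)$ requires attention. Since $\U_{m+1}$ embeds continuously into $C([0,1]; H^m)$ with $\nabla u \in L^2([0,1]; H^m)$, and since $m-1 > d/2$ makes $H^{m-1}$ a Banach algebra with respect to which smooth substitution operators are smooth (Moser / tame-product estimates), this composition sends $\U_{m+1}$ smoothly into $E_{m-1}$ with $C^k$-norms depending only on $\|u\|_{\U_{m+1}}$.

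The second and hardest step is to show that $D_u F(\bar u, u_0, \eta)$ is a topological isomorphism from $\U_{m+1}$ onto $E_{m-1} \times H^m$. Its action on $v$ is
$$
v \mapsto \bigl(\partial_t v - \Delta v - (\partial_u f)(x, \bar u, \nabla \bar u)\, v - (\partial_p f)(x, \bar u, \nabla \bar u) \cdot \nabla v,\; v(0)\bigr),
$$
a linear parabolic Cauchy problem with coefficients in $L^\infty([0,1]; H^{m-1}) \hookrightarrow L^\infty([0,1]\times\T^d)$. Galerkin approximation together with energy and commutator estimates (in the spirit of Lemma \ref{mainlem2}) should yield unique solvability in $\U_{m+1}$ for arbitrary data $(\xi, v_0) \in E_{m-1} \times H^m$, with an inverse bound depending only on $\|\bar u\|_{\U_{m+1}}$, hence via \eqref{R_+} only on $\|u_0\|_m$ and $M$. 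The main obstacle is precisely to push the linear parabolic estimates up to the $\U_{m+1}$-level from data at the $H^{m-1}$- and $H^m$-levels with constants controlled purely by $\|u_0\|_m$ and $M$; this is a classical but technical commutator computation.

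With those two ingredients the quantitative implicit function theorem yields, for each $(u_0, \eta) \in H^m \times \K$, an open neighborhood in $H^m \times E_{m-1}$ on which \eqref{qps} is uniquely solvable in $\U_{m+1}$ with $C^\infty$-dependence on the data, the radius being governed by the operator norm of $(D_u F)^{-1}$ and by the $C^2$-bounds on $F$ around $\bar u$ — both of which are controlled by $\|u_0\|_m$ and $M$ alone. Setting $Q$ to be the union of these neighborhoods produces the required open set. The $C^k$-norms of the solution map, bounded on bounded subsets of $Q$, follow by differentiating the implicit relation $F(u, u_0, \eta) = 0$ and using the already established uniform control of $(D_u F)^{-1}$ and of the higher derivatives of $F$.
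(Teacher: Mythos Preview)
Your proposal is correct and follows essentially the same approach as the paper: the paper applies the inverse function theorem to the map $\Phi:\U_{m+1}\to H^m\times E_{m-1}$, $u\mapsto(u(0),\partial_t u-\Delta u-f(x,u,\nabla u))$, which is equivalent to your implicit function theorem applied to $F(u,u_0,\eta)=\Phi(u)-(u_0,\eta)$. For the two technical ingredients you identify, the paper simply cites \cite{RS} for smoothness of the Nemytskii term and \cite{ASU68} for the linear parabolic isomorphism, so your more explicit sketch of those points is in the same spirit but slightly more detailed.
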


\begin{proof}
Consider the mapping 
$$
\Phi: \U_{m+1} \mapsto H^m \times E_{m-1}; \ \ \ u \mapsto \left(u_0, \partial_t u- \Delta u - f(x, u, \nabla u ) \right).
$$
It is $C^\infty$ smooth (the smoothness of the nonlinear component follows from a much more general result in \cite{RS}, pp.~14,~381),
and by (A0), its image contains $H^m \times \K.$ For  any $(u_0, \eta) \in H^m\times \K$ and for $u=\Phi^{-1}(u_0, \eta) $, the linear mapping
$$
d\Phi(u) : \U_{m+1} \mapsto H^m\times E_{m-1}  ; \ \  d\Phi(u) (v) = (v_0,  \Delta v + D_u f(x,u,\nabla u) v + D_{\nabla u}f(x,u, \nabla u) \nabla v 
).
$$
is an isomorpism of $ \U_{m+1}$ and $H^m\times E_{m-1}$ (see e.g. in \cite{ASU68}), so by the inverse map theorem the point  $(u_0, \eta) $ has a
\nbh\ $Q_{ (u_0, \eta) } \subset H^m\times E_{m-1}$, where the inverse mapping $\Phi^{-1}$ exists and is $C^\infty$. By the constructive 
nature of the inverse map theorem, the size of the \nbh\ and the norms of derivatives of the inverse mapping are bounded in terms 
of the norms $\| u_0\|_m$ and $|\eta|_{E_{m-1}}$. Since $\eta$ belongs to the compact set $\K$, these quantities may be chosen
  $\eta$--independent.
 Taking for $Q$ the open set $\cup_{ (u_0, \eta)} Q_{ (u_0, \eta)}$ we arrive at
 the conclusion.
\end{proof}

 Due to the lemma and \eqref{eta_m}, 
  Remarks \ref{r_1}, \ref{r_2} apply to the equations \eqref{qps} if the assumptions $(A0)-(H2')$ hold.

\begin{example}\label{e_cgl} 
 Consider the complex Ginzburg-Landau (CGL)  equation:
\be\label{cgl}
\partial_t u -(\nu_1+ i\nu_2) \Delta u  + \gamma u + i  |u|^{2r} u = \eta, \quad x\in \T^d, \quad 
 u(0)=u_0, 
\ee
where $\gamma>0, \nu_1>0$ and $\nu_2\ge0$. This is an example of system \eqref{qps}
with $n=2$. Assumption (H$2'$) with ${r'}=0$ obviously holds for all equations \eqref{cgl}. Assumption (A0) also is
fulfilled for a large class of the equations. In particular, it holds if $\nu_2>0$ and  $d\le2$, $r\in \N$ or 
$d=3$ and $r=1$; or if $\nu_2=0$ and  $r\in \N$. See Appendix. 
\end{example}
\medskip

We may assume that 
$$
u_0 \in B_{H^m}(R)\quad \text{a.s.},
$$
for some $R>0$. 
According to assumption (A0), the mapping 
$S: H^m \times  E_{m-1} \supset Q_R \to H^m$ is $C^2$--smooth and its $C^2$--norm is bounded. 
 Here $Q_R$ is a \nbh\ of 
$ B_{H^m}(R) \times \K$ in $ H^m \times  E_{m-1}$ which contains each point of $ B_{H^m}(R) \times \K$ with its \nbh\ 
of radius $c_R>0$. As above, 
$S(u_0, \eta) =u_1$ where
$u_t$, $0\le t\le1$, is a solution of \eqref{qps}. So \eqref{qps} defines in $H^{m}$ the 
random dynamical system  \eqref{sys1.1}. By \eqref{eta_m} and 
(A0) the trajectories of \eqref{sys1.1} with $u_0 \in B_{H^m}(R)$
satisfy $\| u_k\|_m \le R_+$, $R_+=R_+(R,M)$,  for all $k$, a.s. 
 In the following three steps we will  check that  the 	assumptions 
 (A1),  (H2) and (H3) hold with $H=H^m, E=E_{m}$ and $V= H^{m+1}.$  Verifying assumption (H2), we only need to take into account Remark~\ref{r_new_norm}.

\bigskip

{\textbf{Step 1.}} ({\textbf{Improvement of regularity}) 
In view of (A0) to  prove (A1) it  remains to show that 
 $S$ restricted to $Q_R\subset  H^m \times E_m $ takes values in $H^{m+1}$ and is smooth. Evoking  equality \eqref{**} (where now
 $S$ is the operator in \eqref{sys1.1}) we see that it suffices to examine 
  the regularity of the maps 
   $D_{u_0} S(u_0,  \eta) h$ and $D_\eta S( u_0, \eta) \xi.$ Let $u = u(u_0, \eta) \in \U_{m+1}$  solves \eqref{qps} with $u_0 \in H^m$ and $ \eta \in E_m.$
Since  $D_\eta S(u_0, \eta) \xi = v_1(0,\xi;u)$ and $D_{u_0} S(u_0, \eta) h= v_1(h,0;u),$ where $v_t(v_0, \xi;u)$ solves the 
linear system
\be\label{linearqp}
\partial_t v= \Delta v + D_u f(x,u,\nabla u) v + D_{\nabla u}f(x,u, \nabla u) \nabla v + \xi, \ \ v(0)=v_0,
\ee
we only need to examine  the regularity of the mapping 
 $(v_0, \xi,u) \mapsto v_1(v_0, \xi;u).$ Due to the classical results from the linear parabolic theory (see   \cite{ASU68},~Part~7), if
 $u\in \U_{m+1}$, then for $j=m$ and $j=m+1$, for $v_0\in H^j, \xi\in E_{j-1}$ eq.~\eqref{linearqp}  has a unique solution 
 $v\in \U_{j+1}$. By an argument similar to that 
  in Proposition~\ref{improreg}, we  show that $v$ smoothly depends on $u \in \U_{m+1} $. So $v_1(0, \xi;u) \in H^{m+1}$ smoothly 
  depends on $u\in \U_{m+1}$, $\xi\in E_m$. To analyse $v_t = v_t(h,0;u)$, as in Section~\ref{s_2dNS} we consider
  $w= tv$. This vector-function satisfies \eqref{linearqp} with $v_0=0, \xi=v(h,0; u)$. So $v_1=w_1 \in H^{m+1}$ smoothly depends on
  $\xi\in \U_{m+1}$, i.e. on $h\in H^m$ and $u\in \U_{m+1}$. This implies the smoothness, required in (A1). 
  
\bigskip

{\textbf{Step 2.}}  Below we follow \cite{KNS}, Section 4.2. 
Firstly we claim that for any $ R>0$ and 
$u_0 \in H^m$ with $\|u_0\|_m < R$,
\be\label{a2}
\| S(u_0,0) \|_m \leq C_R \|u_0 \|_{r'},
\ee
where $C_R>0$ depends only on $R.$ Indeed, since $f(x,0,0)=0$, then 
$$
S(u_0,0)=  \int_0^1 (\partial/\partial t) S(tu_0,0 ) u_0\ dt
= \int_0^1 D_u S(tu_0,0 ) u_0\ dt, 
$$
and  we only need to show that 
 $\| D_u S(tu_0,0) u_0   \|_m \leq C_R \|u_0 \|_{r'}$ if $\|u_0\|_m\leq R$. 
Denote by  $\tilde{u}_t(tu_0, 0)$ the solution of equation \eqref{qps} with $u_0$ replaced by $tu_0$ and $ \eta=0.$ 
By  \eqref{eta_m} and \eqref{R_+}, $\| \tilde u_t\|_m \le C(R,r)=: R_+$ for all $t$. 
 Consider the following linear system 
\be\label{linearqp2}
\partial_t v= \Delta v + D_u f(x,\tu,\nabla \tu) v + D_{\nabla u}f(x,\tu, \nabla \tu) \nabla v ,  \ \ \ v(0)=u_0;
\ee
then $D_u S(tu_0,0) u_0   =v_1$. Choosing any points
$$
0<t_{r'+1} <t_{r'+2} <\dots  <t_{m} =1
$$
and arguing as at Step 1 we find that
$\ 
\| v_{t_{r'+1}}\|_{r'+1} \le C'_R \| u_0\|_{r'} , \dots, \| v_{t_{m}}\|_{m}  \le \| v_{t_{m-1}}\|_{m-1}. 
$
So $\| v_1\|_m \le C_R^{''} \| u_0\|_{r'}$ 	and \eqref{a2} follows.

 Due to \eqref{a2} and Assumption (H$2'$), in the space $H^m$ 
 there exists an equivalent norm $\| \cdot\|'_m$  such that for any $u_0 \in H^m$ with $\| u_0\|_m \leq R,$
\be\label{S_q} 
\| S(u_0,  0)\|'_{m} \leq q \|u_0\|'_m , 
\ee
where $q\in (0,1)$ depends on $R.$ Indeed, let $\| \cdot \|'_{m}:= \|\cdot \|_{r'} + \vep \| \cdot \|_m,$ where $\vep>0$ is a parameter to be defined. We have 
$$
\| S(u_0,  0)\|'_{m} = \| S(u_0,  0)\|_{r'} + \vep \| S(u_0,  0)\|_{m} \leq e^{-\kappa} \|u_0\|_{r'} + \vep C_R\|u_0\|_{r'} \leq (e^{-\kappa}+\vep C_R) \|u_0\|'_m.
$$
It remains  to choose $\vep$ so small that $q:= e^{-\kappa}+\vep C_R<1.$ Since the $C^2$--norm of the map 
$S:Q_R \to H^m$ is bounded, then \eqref{S_q} implies \eqref{linearbound} for $\| u\|_m \le R_+$ with the norm $\|\cdot \|_m$ 
replaced by  $\|\cdot \|'_m$. This proves  assumption   (H2) in the  weaker form, suggested in Remark~\ref{r_new_norm}.

\bigskip

{\textbf{Step 3.}} Now we verify (H3), i.e. check that 
 for any $u_0 \in H^m, \eta \in E_m,$ the linear operator $D_\eta S(u_0, \eta): E_m \mapsto H^{m}$ has dense image. Define the
  operator $\ML: E_m \mapsto H^{m} $ by $\ML(\xi)= v_1(0, \xi),$ where $v_t(v_0, \xi)$ solves  equation \eqref{linearqp}. By Fredholm's alternative it suffices  to show that  $\ML^*: H^{m} \mapsto E_m$ has trivial kernel.
 Denote $S_t^1: v_t \mapsto v_1$ the resolving operator for equation \eqref{linearqp} with $\xi=0$, $v(t)=v_t$.
 Consider the adjoint system 
 $$
-\partial_t w = \Delta w + \big(D_u f(x, u, \nabla u)\big)^* w - \text{div}\, \big[ \big( D_{\nabla u}f(x,u, \nabla u)\big)^*   w\big], 
 \ \ \ w(1)=w_1,
$$
and denote by $\tS_1^t : w_1 \mapsto w_t$, $0\le t\le1$, its resolving operator with initial condition at $t=1$.
Then  $\langle v_t , w_t \ran\equiv \text{constant},$ so the operator 
$\tS_1^t$ is the $L^2-$dual  of $S_t^1.$ It follows that, for  $\eta \in E_m$ and $ w \in H^{2m},$ 
\begin{eqnarray*}
&&\int_0^1 \langle \eta_t, (\ML^* w)_t  \rangle_m dt = \langle \eta, \ML^* w \rangle_{E^1_m} = \langle \ML(\eta), w  \rangle _{m} = \langle \int_0^1 S_t^1 \eta_t  dt, w \rangle_{m}\\
& = & \int_0^1 \langle L^{m} w, S_t^1 \eta_t  \rangle dt = \int_0^1   \langle \tS_1^t L^{m} w, 
  \eta_t  \rangle dt ,\quad L=1-\Delta.
\end{eqnarray*}
 If $w\in H^m$, then arguing as in Section~\ref{s_2dNS} 
we get that 
$$
L^m \ML^* w = L^{m/2} \xi, \; \;\xi\in \U_m  ;  \qquad \xi_1 = L^{m/2} w. 
$$
So $\ML^* w=0$ implies that $\xi_1=0$ and $w=0$, i.e.  $\ML^*$ has trivial kernel.
\\

Now an application of  Theorem \ref{t_final} implies the validity for  eq. \eqref{qps}  of obvious reformulations of 
Theorem~\ref{t_2dNSE} and Corollary~\ref{c_2dNSE}.

\subsection{Appendix.}

{\it Case $\nu_2>0$.}  We will only discuss equations with $d\ge2$ and to  simplify notation take $\nu_1=\nu_2=\gamma=1$:
\be\label{x1}
\partial_t u -i \Delta u+(u-\Delta u) +i |u|^{2r} u = \eta_t, \quad \| u_0\|_m =:R,
\ee
where $\|\eta_t\|_{m-1} \le M$ for all $\eta$, with a fixed $m>d/2$. We start with apriori estimates, assuming that $u$ is a smooth solution of 
\eqref{x1}. 
\smallskip

{\it Step 1}.  Noting that eq. \eqref{x1} with $\eta=0$ and with removed term $u-\Delta u$ is hamiltonian with the Hamiltonian 
$
H(u) = \int\big( \tfrac12 | \nabla u|^2 + \tfrac1{2r+2} |u|^{2r+2}\big) dx
$
and since  $\nabla H =-\Delta u +|u|^{2r} u$, we find that  
\begin{equation*}
\begin{split}
\partial_t &\big(H(u_t) +\tfrac12 \|u_t\|_0^2\big) 
 = \langle \nabla H +u, \Delta u-u\rangle +  \langle \nabla H +u, \eta \rangle\\  &\le -C(H(u_t) + \tfrac12 \| u_t\|_0^2) + \|u_t\|_1 \|\eta_t\|_1 
+\int |u|^{2r+1} |\eta| \,dx 
\le - \tfrac{C}{2} (H(u_t)+ \tfrac12 \| u_t\|_0^2)  +C_1( \|\eta_t\|_1). 
\end{split}
\end{equation*}
So
\be\label{x2}
H(u_t)  +\tfrac12 \|u_t\|_0^2 \le C(R,M)\quad \forall\, t\ge0. 
\ee
\smallskip

{\it Step 2}.  Now let us consider $\tfrac12\partial_t \| u\|_m^2$, where $  \| u\|_m^2 = \|u\|_0^2 + \|\nabla^m u\|_0^2$. 
We have:
\be\label{x33}
\tfrac12\partial_t \| u\|_m^2 + \| u\|_{m+1}^2 \le | \langle \nabla^m |u|^{2m} u, \nabla^mu\rangle |
+ |\langle \eta_t,( (-\Delta)^m +1) u \rangle |.
\ee
The first term in the r.h.s. is bounded by a finite sum of the terms of the form
$$
U = C\int |v^1\dots v^{2r+2} |\,dx, \quad v^j = \partial^{a_j} u\; \text{or} \; v^j=\partial^{a_j} \bar u,
$$
where 
\be\label{x4}
|a_1|+ \dots+ |a_{2r+2}| =2m,\quad |a_j |\le m \; \forall\, j. 
\ee
Then
\be\label{x5}
U \le \prod_1^{2r+2} |v^j|_{p_j}, \quad \sum \frac1{p_j} =1. 
\ee
By \eqref{x2}  and the Sobolev embedding, 
$$
|u_t|_{L_q}  \le C_1(R, M, \eps) \;\; \forall\, t\ge0, \qquad \frac1{q} = \frac{d-2+\eps}{2d}.
$$
Here and below $\eps=0$ if $d\ge3$ and  $\eps$ is any positive number if $d=2$. So by the Gagliardo--Nirenberg inequality 
\be\label{x6}
|v^j_t|_{p_j} \le C \| u_t\|_{m+1}^{\alpha_j} |u_t|_{L_q}^{1-\alpha_j} \le C_1(R, M, \eps)  \| u_t\|_{m+1}^{\alpha_j},
\ee
where
$$
\frac1{p_j} = \frac{|a_j|}{d} + \left(\frac12 - \frac{m+1}{d}\right) \alpha_j + \frac{(1-\alpha_j)(d-2+\eps)}{2d} =
\frac{2|a_j|+(d-2+\eps)}{2d} - \frac{\alpha_j(m-\eps)}{d} . 
$$
Denoting $\alpha = \sum \alpha_j$ we get from the last relation and \eqref{x4}  that 
$$
1= \sum\frac1{p_j} = \frac{4m+(d-2)(2r+2)+\eps'}{2d} - \frac{\alpha(m-\eps')}{d} . 
$$
So 
$\ 
\alpha =\big({4m+(d-2)(2r+2)-2d+\eps'}\big) /{2m}, 
$
which is  bounded by $2$ if $\eps'$ is sufficiently small and 
$
d< 2\frac{r+1}{r} ;
$
i.e. $r$ is any if $d=2$ and $r=1$ if $d=3$. 
Under this condition relations 
\eqref{x33},  \eqref{x5} and  \eqref{x6} jointly imply that 
$$
\partial_t \| u\|_m^2 + 2 \| u\|_{m+1}^2  \le C(M,R) \| u\|^\alpha_{m+1} +CM \|u\|_{m+1}, \quad \alpha<2. 
$$
From here 
$$
\| u_t\|_m \le C_1(M,R) \quad \forall \, t\ge0; \qquad | u|_{\U^T_{m+1}} \le C_2(M,R,T)\quad \forall\, T>0. 
$$

{\it Step 3}. The obtained estimates imply (A0) via Galerkin's method. 
\medskip

{\it Case $\nu_2=0$, $d$ is any.} In this case the function $r(t,x) = | u(t,x)|$ satisfies a differential inequality with the maximum
principle, e.g. see in \cite{K99}, where the white in time stochastic force $\zeta_t$ has to be replaced by the easier for 
analysis force $\eta_t$. This implies that 
$
|u_t|_{L_\infty} \le C(R,M)
$
for all $t$ and any dimension $d$. Then (A0) follows by the same argument as  at the  Steps~2-3 above.



\end{document}